\newtheorem{observation}{Observation}
\newtheorem{fact}{Fact}
\newcommand{\red}[1]{\textcolor{red} {#1}}
\newcommand{\eps}{\varepsilon}
\newcommand{\repeatlemma}[1]{%
  \begingroup
  \renewcommand{\thetheorem}{\ref{#1}}%
  \expandafter\expandafter\expandafter\lemma
  \csname replemma@#1\endcsname
  \endlemma
  \endgroup
}
\xdef\csname replemma@#1\endcsname{%
    \unexpanded\expandafter{\BODY}%
  }%
\unskip\label{#1}\endlemma
\newcommand{\repeattheorem}[1]{%
  \begingroup
  \renewcommand{\thetheorem}{\ref{#1}}%
  \expandafter\expandafter\expandafter\theorem
  \csname reptheorem@#1\endcsname
  \endlemma
  \endgroup
}
\xdef\csname reptheorem@#1\endcsname{%
    \unexpanded\expandafter{\BODY}%
  }%
\unskip\label{#1}\endtheorem
\title{Gourds: a sliding-block puzzle with turning}
\author{Joep Hamersma}{Department of Information and Computing Sciences, Utrecht University, the Netherlands\footnote{Research was done while at Utrecht University, but no longer affiliated with Utrecht University}}{}{}{}
\author{Marc van Kreveld}{Department of Information and Computing Sciences, Utrecht University, the Netherlands \and \url{m.j.vankreveld@uu.nl}}{}{}{Partially supported by the Netherlands Organisation for Scientific Research (NWO) under project no.~612.001.651.}
\author{Yushi Uno}{
Graduate School of Engineering, 
Osaka Prefecture University, Japan \and \url{uno@cs.osakafu-u.ac.jp}}{}{}
{Partially supported by JSPS KAKENHI Grant Number JP17K00017 and by JST CREST Grant Number JPMJCR1402, Japan.}
\author{Tom C. van der Zanden}{Department of Data Analytics and Digitalisation, Maastricht University, the Netherlands 
\url{T.vanderZanden@maastrichtuniversity.nl}}{}{}{}
\authorrunning{Hamersma, van Kreveld, Uno, van der Zanden}
\keywords{computational complexity, divide-and-conquer, Hamiltonian cycle, puzzle game, reconfiguration, sliding-block puzzle}
\begin{document}

\nolinenumbers

\maketitle

\begin{abstract}
We propose a new kind of sliding-block puzzle, called \emph{Gourds}, where the objective is to rearrange $1\times 2$ pieces on a hexagonal grid board of $2n+1$ cells with $n$ pieces, using sliding, turning and pivoting moves. This puzzle has a single empty cell on a board and forms a natural extension of the 15-puzzle to include rotational moves. 
We analyze the puzzle and completely characterize the cases when the puzzle can always be solved. 
We also study the complexity of determining whether a given set of colored pieces can be placed on a colored hexagonal grid board with matching colors. We show this problem is {\sf NP}-complete for arbitrarily many colors, but solvable in randomized polynomial time if the number of colors is a fixed constant.



\end{abstract}

\section{Introduction}

Mechanical puzzles come in many types, one of which is the sliding-block puzzle. Well-known examples include the 15-puzzle and Rush Hour, both played on a square grid board. However, these puzzles are quite different: the 15-puzzle has unit square movable pieces containing the numbers from 1 to 15, and the objective is to sort the numbers on a board with a single empty space. Rush Hour has pieces of different sizes, typically $1\times 2$ and $1\times 3$ rectangles, the board has more empty spaces, and the objective is to bring a particular piece to a particular place.
Their similarities are the square grid board, and sliding pieces by translation only.

Sliding-block puzzles have attracted the interest of researchers for a long time, 
and they have been investigated in both recreational mathematics and algorithms research. 
The 15-puzzle was introduced as a prize problem by Sam Loyd in 1878 \cite{SS06}. The question whether any configuration can be realized was soon understood using a characterization 
by odd/even permutations~\cite{10.2307/2369492}. 
The complexity of computing the smallest number of steps to reach the solution turned out to be {\sf NP}-complete for $n\times n$ boards~\cite{RW86,DR18}. 
The other highly popular sliding-block puzzle, Rush Hour, is much more complicated. It was shown to be {\sf PSPACE}-complete when the size of pieces is $1\times 2$ and $1\times 3$ \cite{DBLP:journals/tcs/FlakeB02}, and later even if the size of each piece is $1\times 2$ \cite{DBLP:journals/corr/abs-cs-0502068}, or $1\times 1$ with obstacles~\cite{brunner20201,solovey2016hardness}. Rolling-block puzzles are a variation on sliding-block puzzles with a 3D aspect, extensively studied by Buchin et al.~\cite{buchin2012rolling,buchin2007rolling}.
A general treatment of sliding-block puzzles as non-deterministic constraint logic was given by Hearn and Demaine \cite{HD05,DBLP:books/daglib/0023750}. 
Many other puzzles have been shown {\sf NP}-hard or {\sf PSPACE}-hard~\cite{DBLP:books/daglib/0023750,kendall2008survey}.


In this paper we introduce a new type of sliding-block puzzle which we call Gourds. The name ``gourd'' refers to the shape of the pieces, which are essentially $1\times 2$ pieces on a board. 
Like in the 15-puzzle, only one grid cell is empty. Unlike the 15-puzzle and Rush Hour, gourds can also change orientation: a gourd can move straight to cover the empty cell, or a gourd may make a turn to do so.
One can easily imagine such a gourd puzzle on a square grid board. 
If a board is rectangular, then its two dimensions must be odd, otherwise it cannot have exactly one empty cell. Imagine such a board, for example a $3\times 5$ board as in Figure~\ref{fig:square}. Also, imagine the objective is to bring the blue gourd to the bottom row.
It is not hard to see that this cannot be done:
if we color the board in a checkerboard pattern, we will get one more (say) white cells than black cells. Every gourd covers one black and one white cell, so the empty cell is always white. This means that no gourd can uncover the black cell it covers, because otherwise, that black cell would be the empty cell. Consequently, gourds cannot travel over the board. In fact, the blue gourd can only be in one of the three positions (shown in Figure~\ref{fig:square}, right). This argument holds true for any board based on a square grid with an odd number of cells. 
This implies that square grid boards are not suitable for the Gourds puzzle.

\begin{figure}[tb]
\centering
\scalebox{0.8}{\includegraphics{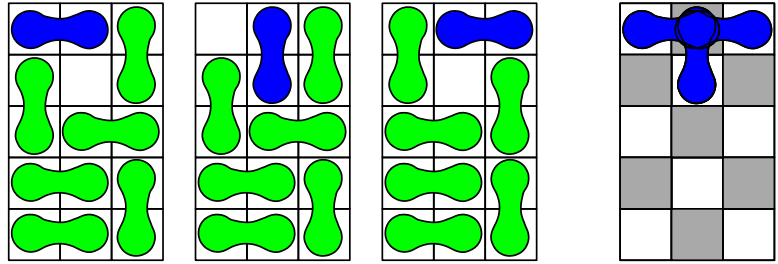}}
\caption{Gourds on a $3\times 5$ square grid board. The blue piece cannot depart from the top-middle cell, and can only be in the three positions shown on the right.}
\label{fig:square}
\end{figure}


The puzzle we introduce is played on a hexagonal grid board. On such boards, 
we allow gourds to make three different kinds of moves: \emph{slide}, \emph{turn}, and \emph{pivot} (see Figure~\ref{fig:moves}). 
In the slide move, a gourd translates one unit in the direction parallel to its own orientation; the two units of the gourd and the empty cell must align and be adjacent for this move to be possible.
In the turn move, either to the left or to the right, a gourd axis and the empty cell use three adjacent cells that make an angle of $120^\circ$. 
In the pivot move, a gourd is adjacent to the empty cell with both of its ends, that is, the three cells involved form an equilateral triangle. The gourd rotates while one end stays stationary (where it pivots). 

\begin{figure}[tb]
\centering
\scalebox{0.90}{\includegraphics{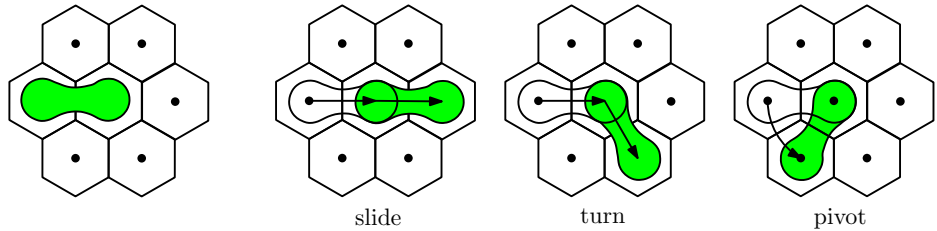}}
\caption{The three types of gourd moves: slide, turn, and pivot.}
\label{fig:moves}
\end{figure}

We introduce two types of the puzzle: the colored type and the numbered type. 
In the colored type, each cell of the board has a color, each gourd end has a color (so a gourd has one or two colors), and the goal of the puzzle is to get each gourd end on a cell of the same color so that they match, by using any sequence of the three types of moves. 
The number of colors can be much smaller than the number $2n$ of gourd ends; typically, there are two to six colors in total.  
In the numbered type, the cells and the gourd ends have a number each (usually from 1 to $2n$ for $n$ gourds), and the goal is to sort the numbers of gourd ends to match those on the board by a sequence of gourd moves, 
similar to the 15-puzzle. 
Notice that the numbered type is a special case of the colored type, since all gourd ends could have a distinct color.



Solving these Gourds puzzles may be done in two phases: (i) imagining a target placement of all gourds so that the colors or numbers are correctly covered, and then (ii) reconfiguring a given initial configuration to the target one that is found in the first phase by a sequence of gourd moves. 
We call these two phases the \emph{placement} phase and the \emph{reconfiguration} phase, respectively. 
Combining these with two types of puzzles (colored and numbered), we now have four problems in Gourds puzzles: 
$\mbox{\sc Colored/Numbered}$ {\sc Gourd} $\mbox{\sc Placement/Reconfiguration}$ (see Figure~\ref{fig:two_problems} for three of them). 
The objective of this paper is to analyze these Gourds puzzle problems mathematically and algorithmically. 

For the placement problem, the numbered type, i.e., {\sc Numbered Gourd Placement}, is trivial since each number appears exactly once both on the board and on a gourd. On the other hand, the colored type turns out to be hard: we show that the decision version of {\sc Colored Gourd Placement} is {\sf NP}-complete. Interestingly, the proof makes use of \emph{budgets} of pieces in a {\sc 3SAT} reduction: there are no connector gadgets between variable and clause gadgets. If the number of colors is constant, the problem can be solved in randomized polynomial time. 
The reconfiguration problem is essentially the same for the two types once we have matched up the initial and target configurations of the gourds. 
We are interested in boards that allow any reconfiguration of the gourds, and  we show a complete characterization of such boards, 
provided they are hole-free. 
We also show that any reconfiguration is achieved within quadratic number of moves, which is worst-case optimal. 

\begin{figure}[bht]
\centering
\scalebox{0.90}{\includegraphics{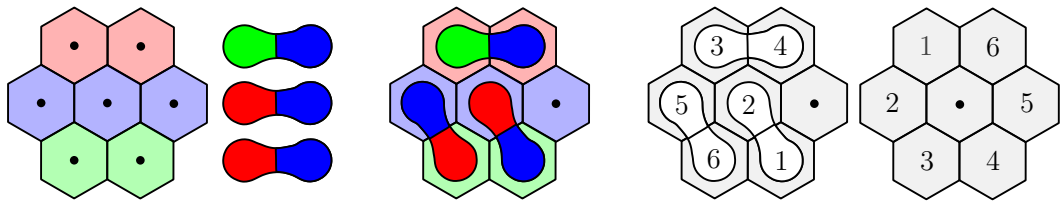}}
\caption{An instance of {\sc Colored Gourd Placement} (left), {\sc Colored Gourd Reconfiguration} (middle), and {\sc Numbered Gourd Reconfiguration} (right pair).}
\label{fig:two_problems}
\end{figure}




This paper is organized as follows. In Section~\ref{sec:prelim}
we discuss gourds, moves, and boards further, and make some basic observations.
In Section~\ref{sec:placement} we show 
the hardness of {\sc Colored Gourd Placement}. 
In Section~\ref{sec:reconfig} we characterize boards that allow any reconfiguration (in quadratically many moves). 

\section{Preliminaries}
\label{sec:prelim}

In this section, we discuss gourds and their moves, and boards to give formal definitions and related observations. 
%

Recall that we define three kinds of moves for gourds, that is, slide, turn, and pivot. 
To realize these moves physically, such a piece must have a certain shape that is somewhat smaller than the union of two hexagons. A good choice is to use two discs and a concave ``neck'' that connects these discs. The concave neck is bounded by four concave circular arcs, two of which are also boundary parts of the gourd, and the other two coincide with parts of the discs. The resulting piece looks like a gourd.

We have chosen to use the pivot move and not the ``sharp turn'', where a gourd rotates over $120^\circ$. This would be the alternative in the case where both ends of a gourd are adjacent to the empty cell. 
There are several reasons for this choice. First, the pivot move is easier to perform by hand in the physical situation. Second, the gourd would have to be smaller to allow this move, unless we perform a sharp turn by two consecutive pivots, and then we do not need the sharp turn anymore. In fact, as we can easily check by hand, the pivot move is strictly more powerful than the sharp turn as shown in the following observation, which is the third reason for the choice 
(see also 
Figure~\ref{fig:threecell_board} in the appendix). 

\begin{observation}
\label{obs:threecell_board}
On a 3-cell board where the cells are mutually adjacent, a two-colored gourd can reach all six possible positions from a starting position with a pivot move, while it can reach three positions with the sharp turn move. 
\end{observation}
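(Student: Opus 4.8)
The statement concerns a finite configuration space, so the plan is to enumerate it and trace the two move types by hand. First I would fix names $1,2,3$ for the three mutually adjacent cells and encode a position of the two-coloured gourd as an ordered pair $(a,b)$, meaning that the first-coloured end sits on cell $a$, the second-coloured end on cell $b$, and the remaining cell is empty. This makes the $6=3\cdot 2$ ``possible positions'' explicit: $(1,2),(2,1),(2,3),(3,2),(3,1),(1,3)$. Since every move type is its own inverse here, reachability is symmetric, so ``reachable from a starting position'' is a well-defined set.

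For the pivot part, I would observe that a pivot keeps one end on its cell and swings the other end onto the (unique) empty cell; hence from $(a,b)$ with empty cell $c$ there are exactly two pivot moves, to $(a,c)$ and to $(c,b)$. Tabulating these six transition pairs on the six positions gives a $2$-regular graph, and one checks it is connected — in fact it is the single $6$-cycle $(1,2)\to(1,3)\to(2,3)\to(2,1)\to(3,1)\to(3,2)\to(1,2)$. Therefore all six positions are reachable from any start using pivots alone.

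For the sharp-turn part, I would first pin down the move in this configuration: the three cells form an equilateral triangle, and a sharp turn is a $120^\circ$ rotation of the rigid gourd about the centre of that triangle. Such a rotation is a $3$-cycle on $\{a,b,c\}$ (in one of the two directions), so it sends the position $(a,b)$ with empty cell $c$ to either $(b,c)$ or $(c,a)$. Splitting the six positions into the two triples $\{(1,2),(2,3),(3,1)\}$ and $\{(2,1),(3,2),(1,3)\}$, a short check shows each sharp turn maps a position to another within the \emph{same} triple (indeed it acts as a cyclic rotation on each triple) and never crosses between them. Hence from a given start, exactly the three positions in its triple are reachable, which also re-confirms the earlier remark that a sharp turn equals two pivots (e.g.\ $(1,2)\to(1,3)\to(2,3)$).

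I do not expect a genuine obstacle: the only delicate point is correctly tracking how the sharp turn permutes the two \emph{labelled} ends of the gourd, not merely the pair of occupied cells, since that is precisely what distinguishes the two size-$3$ orbits from each other and from the full $6$-element set reachable by pivots. Exhibiting the transition graphs (as in Figure~\ref{fig:threecell_board}) makes all the required transitions immediate.
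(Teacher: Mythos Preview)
Your proposal is correct and follows essentially the same approach as the paper: direct exhaustive enumeration of the six labelled positions and their transitions under each move type. The paper's own argument is simply ``easily confirmed by hand as shown in Figure~\ref{fig:threecell_board}''; your ordered-pair encoding and explicit transition graphs (the pivot $6$-cycle and the two sharp-turn $3$-cycles) are exactly the content that figure is meant to convey, just written out more formally.
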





A \emph{board} is any finite and connected subset of regular hexagonal tiles from their infinite tiling of the plane, and we only focus on \emph{hole-free} board. 
Each hexagon is a \emph{cell} of a board. 
To play a set of $n$ gourds, we assume that the number of cells of a board is odd and is $2n+1$. That is, a board always has a single empty cell, which is sometimes denoted by $E$. 

The dual graph to such a hexagonal grid board, embedded as a straight-line graph on the centers of the tiles, is a plane graph where every bounded face is an equilateral triangle. We call it a \emph{board graph}. Since boards and board graphs have one-to-one correspondence, we can also say that a board is 2-connected, Hamiltonian, and so on (see Figure~\ref{fig:structures}, left).

If a board is connected but not 2-connected, then it does not allow almost any reconfiguration. Gourds cannot get from one side of a cut-vertex to the other side of the board. Even if the board has a leaf cell with only one adjacent cell, then only one gourd end can ever reach that leaf 
(see Figure~\ref{fig:irregular_board} in the appendix).
A triangular grid graph is called \emph{the Star of David} if it is the graph shown in Figure~\ref{fig:star_of_david}. For triangular grid graphs the following fact is known \cite{DBLP:conf/cccg/PolishchukAM06}: 

\begin{figure}[htb]
\centering
\scalebox{0.90}{\includegraphics{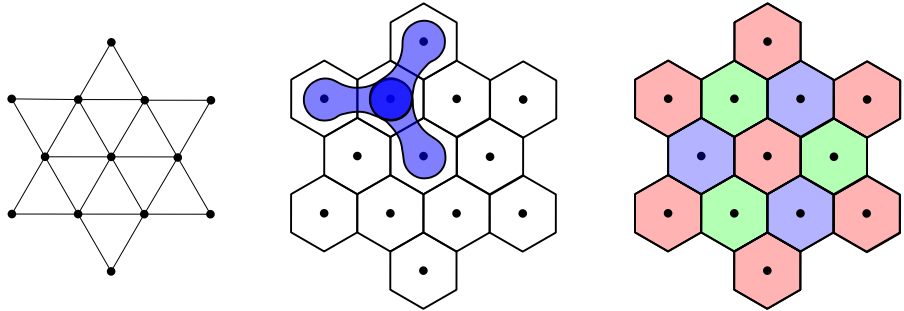}}
\caption{The Star of David graph (left), and three possible positions for a gourd on its corresponding board (middle). A 3-coloring of the hexagons shows that one color (red) is used more often than the other two colors together. A gourd can never depart from its non-red cell (right).}
\label{fig:star_of_david}
\end{figure}

\begin{fact}
For any 2-connected hole-free triangular grid graph, 
it is Hamiltonian unless it is the Star of David graph. 
\end{fact}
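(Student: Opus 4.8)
The plan is to prove two things: that the Star of David graph has no Hamiltonian cycle, and that every other $2$-connected hole-free triangular grid graph does.

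For the first (and easier) direction I would use a parity argument on the canonical $3$-coloring of the triangular lattice, in which the three cells of every triangular face receive distinct colors. For a $2$-connected hole-free triangular grid graph this coloring is unique up to permuting the colors: fixing the colors on one triangle forces the color of the third vertex of every edge-adjacent triangle, and the triangulation of a hole-free region is edge-connected. In any Hamiltonian cycle consecutive cells are adjacent and hence of different color, so between two cyclically consecutive occurrences of a fixed color lies at least one cell of another color; therefore no color class can exceed the other two combined. In the Star of David the center cell and the six ``tip'' cells all get the same color, so the class sizes are $7$, $3$, $3$; since $7 > 3 + 3$, no Hamiltonian cycle can exist.

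For the converse I would argue by induction on the number of cells, with the strengthened hypothesis that for every such $G$ other than the Star of David and every edge $e$ on its boundary cycle (a simple cycle, because $G$ is $2$-connected and hole-free) there is a Hamiltonian cycle of $G$ through $e$. After disposing of the few smallest graphs by inspection, the inductive step locates a reducible boundary configuration and splices the cycle. The prototype is an \emph{ear}: a degree-$2$ cell $v$ whose two neighbors $u,w$ are adjacent, so $uvw$ is a triangular face. Deleting $v$ yields a smaller hole-free triangular grid graph $G'$, and one checks that $G'$ is again $2$-connected (removing a degree-$2$ cell cannot create a cut vertex once the edge $uw$ is present) and that $uw$ lies on the boundary of $G'$; so if $G'$ is not the Star of David, the induction hypothesis gives a Hamiltonian cycle of $G'$ through $uw$, and replacing $uw$ by the path $u$-$v$-$w$ produces the desired cycle of $G$, with the ear chosen disjoint from $e$ so the prescribed edge is kept. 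When no ear exists --- for instance a full hexagonal region, whose convex corners all have degree $3$ --- one needs further reductions, such as scooping out a flat boundary segment and reattaching it with a local zigzag, or peeling off an entire outer row and recursing.

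The heart of the matter, and the step I expect to be the real obstacle, is the case analysis that makes this induction go through. Once the list of admissible reductions is fixed, two things must be verified: (i) a $2$-connected hole-free triangular grid graph to which no admissible reduction applies must be the Star of David --- equivalently, for this class balancedness of the $3$-coloring is sufficient, not merely necessary, for Hamiltonicity, the Star of David being the unique unbalanced example; and (ii) the finitely many cases in which a reduction would recreate the Star of David --- that is, $G$ equal to the Star of David plus a small attachment, such as one extra ear, which shifts the class sizes to $7, 4, 3$ and is in fact Hamiltonian --- must be checked by hand. Granting these, the remaining splicing is routine, and the construction can be made to run in polynomial (indeed linear) time.
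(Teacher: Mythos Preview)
The paper does not prove this statement at all: it is quoted as a known fact with a citation to Polishchuk, Arkin, and Mitchell (CCCG 2006), and the paper simply uses it as a black box to guarantee that every proper board graph is Hamiltonian. So there is no ``paper's own proof'' to compare against.

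On the merits of your sketch: the first direction is fine. The $3$-coloring parity argument showing the Star of David is non-Hamiltonian is correct, and it is essentially the same coloring observation the paper itself exploits (Figure~\ref{fig:star_of_david} and Observation~\ref{obs:star_of_david}) to show that gourds on that board are immobile.

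The second direction, however, is not a proof but a plan, and you say so yourself. The gaps you flag are real. In particular: (i) many $2$-connected hole-free triangular grid graphs have no degree-$2$ vertex at all (any convex region, e.g.\ a filled hexagon), so the ear reduction is not available and you must rely entirely on the unspecified ``further reductions''; (ii) your strengthened induction hypothesis --- a Hamiltonian cycle through \emph{any} prescribed boundary edge --- is stronger than the statement you are asked to prove, and you would need to verify it holds in the base cases and survives each reduction; (iii) the assertion that the only irreducible graph is the Star of David is exactly the crux, and nothing in the sketch supports it. Until the list of reductions is made explicit and items (i)--(iii) are discharged, what you have is an outline of a plausible strategy, not an argument. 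If you want a self-contained proof, you should consult the cited reference and see whether their reductions can be adapted, rather than rediscovering the case analysis from scratch.
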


We now consider solving the Gourds puzzle on the board whose dual graph is the Star of David (consisting of 13 vertices, which is odd).
Any gourd on the Star of David graph board can take only three positions (see Figure~\ref{fig:star_of_david} (middle), and see also Observation~\ref{obs:star_of_david} in the appendix).  
This observation implies that this board is not suitable for the Gourds puzzle. Summarizing, to make the Gourds puzzle playable, we require a board (graph) to satisfy the following three conditions: (i) it has a single empty cell, (ii) it is 2-connected, and (iii) it is not the Star of David, in addition to being hole-free. We call a board satisfying these conditions \emph{proper}. We remark that under this setting a board graph always has a Hamiltonian cycle.

\section{Colored Gourd Placement: Intractability}
\label{sec:placement}

In this section, we discuss the computational complexity of {\sc Colored Gourd Placement}. 



\begin{theorem}
\label{th:npcomplete}
{\sc Colored Gourd Placement} 
is {\sf NP}-complete, even on a 
board of 
four hexagons high.
\end{theorem}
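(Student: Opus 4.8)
The plan is to prove membership in {\sf NP} in one line and to prove hardness by a reduction from {\sc 3SAT} in which, as the introduction hints, variable gadgets and clause gadgets are never joined by a connector gadget; instead they interact only through a global \emph{budget} of available gourds. Membership is immediate: a placement is described by assigning to each gourd an ordered pair of adjacent cells, a certificate of linear size, and one checks in polynomial time that the cell pairs are pairwise disjoint, that exactly one cell stays empty, that every gourd is used exactly once, and that each gourd end carries the colour of the cell it lies on.

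For hardness, the first observation is a reformulation. Since the board colouring is fixed, a valid placement is exactly a domino tiling $T$ of the board minus one cell whose induced multiset of colour pairs $\{\{\mathrm{col}(c),\mathrm{col}(c')\} : cc'\in T\}$ equals the given multiset of gourd colour pairs. So it suffices to build, in polynomial time from a {\sc 3SAT} formula $\varphi$ with variables $x_1,\dots,x_v$ and clauses $C_1,\dots,C_m$, a colouring of a board four hexagons high together with a target gourd multiset such that the matching tilings correspond precisely to the satisfying assignments of $\varphi$. I would lay out the board as a long left-to-right strip of height four: a block of variable gadgets $V_1,\dots,V_v$, then a block of clause gadgets $W_1,\dots,W_m$, with every remaining cell coloured with a private ``neutral'' colour and shaped so that the neutral region admits a \emph{unique} tiling using only neutral--neutral gourds. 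This rigidly tiled corridor both separates the gadgets (so they share no cells) and realizes the ``no connector gadgets'' feature: the only channel between gadgets is competition for the supplied gourds. The lone empty cell is pinned to a forced spot, e.g.\ a single cell of a colour for which no gourd is provided.

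The gadgets and the budget argument go as follows. For each literal occurrence $\ell_{j,k}$ (the $k$-th literal of clause $C_j$) introduce a private ``token'' gourd type $\tau_{j,k}$ and supply exactly one gourd of that type. The clause gadget $W_j$ contains one designated cell that must be covered, and whose three admissible coverings use a $\tau_{j,1}$-, $\tau_{j,2}$-, or $\tau_{j,3}$-gourd respectively, while the rest of $W_j$ completes uniquely regardless of this choice; hence $W_j$ can be finished iff at least one of its three tokens has not been consumed elsewhere. The variable gadget $V_i$ is a small height-four region whose core has exactly two completions, a \emph{true} one and a \emph{false} one, and is coloured so that the false completion consumes precisely the tokens of the positive occurrences of $x_i$ and the true completion consumes precisely those of the negative occurrences; small ``garbage'' sub-gadgets attached to $V_i$ can soak up any released token that no clause claims. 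Taking the supplied multiset to be exactly these tokens, their gadget counterparts, and the forced corridor/core gourds makes the count tight, so a valid placement exists iff every $W_j$ can grab an unused token iff every clause of $\varphi$ has a satisfied literal, i.e.\ iff $\varphi$ is satisfiable; the reduction is clearly polynomial.

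The main obstacle is the gadget design under the height-four restriction: each of $V_i$, $W_j$, and the garbage sub-gadgets must fit inside a bounded-width slice of a four-row hexagonal strip and must admit \emph{exactly} the intended local completions --- no tiling that straddles a gadget boundary, exploits the oblique adjacencies of the hexagonal grid, or mispairs a token --- while the garbage sub-gadgets must absorb an arbitrary subset of a variable's released tokens without creating spurious tilings, and all parity/counting constraints (odd number of cells, exactly one empty cell in the forced position, every gourd used once) must be met simultaneously. Establishing these properties of the fixed small gadgets is finite case analysis, but that is where the real work lies.
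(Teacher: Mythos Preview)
Your high-level plan matches the paper's: a budget-style reduction with isolated variable and clause gadgets that interact only through the global multiset of available gourds. The membership argument and the reformulation as a colour-constrained domino tiling are fine.

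The genuine gap is the garbage mechanism. In {\sc Colored Gourd Placement} the board has exactly one empty cell, so \emph{every} supplied gourd must be placed. A garbage sub-gadget that ``can soak up any released token that no clause claims'' therefore cannot simply be optional: if it does not absorb the token $\tau$, it must be filled with some other gourd $\nu$, and then $\nu$ must in turn be placed somewhere else whenever the slot does absorb $\tau$. With one private colour per literal occurrence, each garbage slot is specific to one token, and you have pushed the optionality problem one level down without resolving it. Moreover, the total number of tokens ending up in garbage is $3m - m - \sum_i(\text{tokens consumed by }V_i)$, and under plain {\sc 3SAT} the last sum varies with the assignment (it depends on whether each $x_i$ has more positive or more negative occurrences), so even the \emph{size} of the garbage is not fixed. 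You acknowledge that the gadget design ``is where the real work lies'', but for this particular design the obstacle is structural, not just a finite case check.

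The paper sidesteps all of this by reducing from {\sc Monotone 1-in-3SAT} restricted to exactly three occurrences per variable and three literals per clause. Exactness on both sides makes every count rigid: setting $x_i$ true leaves exactly six $(x_i,V)$-gourds and setting it false leaves exactly three $(x_i,x_i)$- and three $(V,V)$-gourds; each clause gadget then absorbs precisely the leftovers corresponding to one true and two false literals, and a simple counting argument shows that if two literals of a clause were simultaneously true the surplus $(x_i,V)$-gourds could not all be placed. No garbage gadgets are needed at all, and only $n+2$ colours (one per variable plus two fillers) suffice. If you want to salvage your outline, switching the source problem from {\sc 3SAT} to an exact-satisfaction variant is the key missing idea.
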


\begin{proof}
It is trivial to show containment in {\sf NP}. The problem is {\sf NP}-hard by reduction from {\sc Monotone 1-in-3SAT}, which is {\sf NP}-complete even when considering formulas with exactly three occurrences per variable and exactly three literals per clause \cite[Lemma 5]{PORSCHEN20141}.

Given a formula with $n$ variables and $m$ clauses, we construct an instance with $n+2$ colors: one color per variable plus two ``filler'' colors. The colors are labeled $x_1,\ldots,x_n$ for the variables, and $V,F$ as filler. The color $V$ is the variable filler color, and the color $F$ is the general-purpose filler. The latter serves to isolate the gadgets from each other on a connected board; it does not interact with any of the gadgets in any way. We provide enough gourds colored $(F,F)$ to enable tiling of the filler part of the board.
The board itself is the concatenation of the variable and clause gadgets in any order from left to right.

For each variable, we create a corresponding variable-setting gadget. For each clause, we create a corresponding clause-checking gadget. Each variable gadget can be filled with gourds in two ways (corresponding to true/false assignments). There are no ``physical'' connections between the gadgets. Instead, the gourds that are \emph{left over} from tiling the variable gadgets ``communicate'' the truth assignments to the clause gadgets. 

The variable gadget for $x_i$ consists of a cycle on the board, with cells that alternate in colors $x_i, x_i, V, V, x_i, x_i, V, V,\ldots$, of length 12. The cycle surrounds four cells of color $F$. There are two possible ways of tiling this gadget: either with six gourds colored $(x_i, V)$ (which corresponds to assigning \emph{false} to the variable) or with three gourds colored $(V,V)$ and three gourds colored $(x_i,x_i)$ (which corresponds to assigning \emph{true} to the variable). The variable gadget is surrounded by filler cells. An example of the variable gadget, together with its two possible coverings, is shown in Figure~\ref{fig:variables}. Note that we do not need to consider cycles of different lengths, since in the problem we are reducing from, the number of occurrences per variable is fixed.

\begin{figure}[htb]
\centering
\includegraphics[scale=0.8]{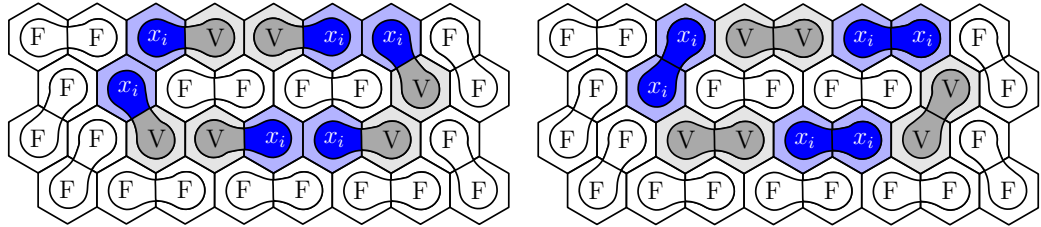}
\caption{The variable gadget, shown with its two possible coverings. The covering shown on the left corresponds to a false assignment, the one on the right to a true assignment.}
\label{fig:variables}
\end{figure}

In the following we will write ``$(a,b)$-gourd'' for a gourd whose two colors are $a$ and $b$.

Note that if we assign \emph{false} to variable $x_i$, we will have three $(x_i,x_i)$-gourds and three $(V,V)$-gourds left over. On the other hand, if we assign \emph{true} to variable $x_i$, we will have six $(x_i,V)$-gourds left over.

Suppose we have a clause $(X\vee Y\vee Z)$. We will show how to construct a clause gadget that can be covered in three ways using three different sets of gourds:

\begin{enumerate}
    \item  Two $(X,V)$-gourds, five $(V,V)$-gourds and one gourd each of: $(Y,Y)$, $(Z,Z)$, $(X,Y)$, $(Y,Z)$, $(X,Z)$ (corresponding to $X=\mbox{\it true},\, Y=Z=\mbox{\it false}$).
    \item  Two $(Y,V)$-gourds, five $(V,V)$-gourds and one gourd each of: $(X,X)$, $(Z,Z)$, $(X,Y)$, $(Y,Z)$, $(X,Z)$ (corresponding to $Y=\mbox{\it true},\, X=Z=\mbox{\it false}$).
    \item  Two $(Z,V)$-gourds, five $(V,V)$-gourds and one gourd each of: $(X,X)$, $(Y,Y)$, $(X,Y)$, $(Y,Z)$, $(X,Z)$ (corresponding to $Z=\mbox{\it true},\, X=Y=\mbox{\it false}$).
\end{enumerate}

The clause gadget consists of two triangles, a smaller one on the left and a larger one on the right, separated by filler parts  
(see Figure~\ref{fig:clause}). The drawing shows a possible covering corresponding to option $(1)$. It is easy to see the other coverings can be realized as well. Note that covering the gadget always consumes exactly one each of $(X,Y),(X,Z)$ and $(Y,Z)$; one copy of each is provided in the input.

\begin{figure}[tb!]
\centering
\scalebox{0.80}{\includegraphics{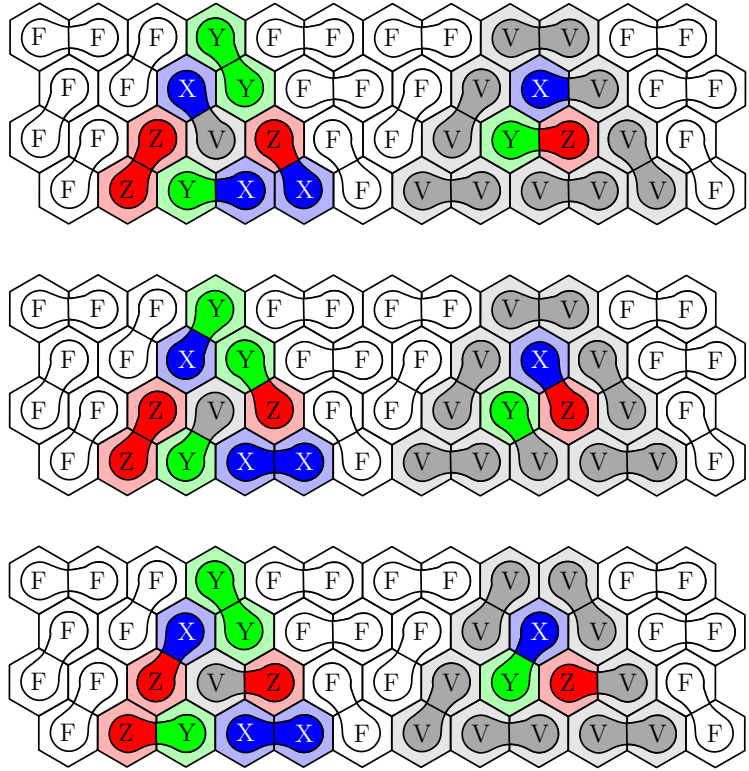}}
\caption{The three ways the clause gadget can be covered. 
From top to bottom: $X=\mbox{\it true}$, $Y=Z=\mbox{\it false}$ and $Y=\mbox{\it true}$, $X=Z=\mbox{\it false}$ and $Z=\mbox{\it true}$, $X=Y=\mbox{\it false}$.}
\label{fig:clause}
\end{figure}

As an example consider $X=\mbox{\it true},\, Y=Z=\mbox{\it false}$. In this case, the gadget consumes two $(X,V)$-gourds (of which six are left over from the variable gadget) and one $(Y,Y)$ and one $(Z,Z)$-gourd (for each of which three are left over from the variable gadget). Since each variable appears in exactly three clauses, this consumes all the left over pieces exactly. This shows that if we cover the variable gadgets in a way corresponding to a satisfying assignment, it is possible to find a covering for the clause gadgets using the remaining gourds (assuming we also have sufficient $(V,V)$-gourds).

The total number of gourds provided to cover the board is as follows:

\begin{itemize}
    \item For every variable $x_i$: three $(x_i,x_i)$-gourds, three $(V,V)$-gourds, six $(x_i,V)$-gourds and ten $(F,F)$-gourds.
    
    \item For every clause $(x_i\vee x_j\vee x_k)$: one $(x_i, x_j)$-gourd, one $(x_i, x_k)$-gourd, one $(x_j, x_k)$-gourd and twelve $(F,F)$-gourds.
    
    \item Additionally, $5m-2n$ $(V,V)$-gourds.
\end{itemize}

The other direction can be seen as follows:

\begin{itemize}
    \item Consider the smaller left triangle. It contains a single hexagon of color $V$; the only way this hexagon can be covered is by using either a $(X,V)$-, $(Y,V)$- or $(Z,V)$-gourd. This tells us that at least one of the variables $X,Y,Z$ must be true, since otherwise we do not have any suitable gourds left over from covering the variable gadgets.
    
    \item Suppose (for contradiction) that more than one of $X,Y,Z$ is true. Without loss of generality, assume both $X,Y$ are true. Then we have left over from covering the variable gadgets six $(X,V)$-gourds and six $(Y,V)$-gourds. Note that in the clause gadget under consideration, we can fit at most one $(X,V)$-gourd and one $(Y,V)$-gourd in the right triangle, and at most one $(X,V)$-gourd OR one $(Y,V)$-gourd in the left triangle. Since each variable appears in exactly three clauses, there are two other clause gadgets, each of which can fit at most two $(X,V)$-gourds and two other clause gadgets, each of which can fit at most two $(Y,V)$-gourds. Thus, in total, we can fit at most eleven $(X,V)$- and $(Y,V)$-gourds. However, we have in total twelve such gourds left over from tiling the variable gadgets, indicating that the rest of the construction cannot be covered with the remaining gourds (since the total area of the gourds equals the total area of the board, all gourds must be used).
\end{itemize}

We finalize the construction as follows: note that the gadgets are designed such that they can be connected from left-to-right, with the left edge of each gadget fitting the right edge of each other gadget, forming a board with a height of four hexagons. We can place the gadgets in arbitrary order. Finally, we add one hexagon with color $F$ at the bottom right of the board to serve as potential spot for the empty hexagon to make an instance of Gourds.
\end{proof}




Note that the construction uses a non-constant number of colors. A natural question is whether a reduction exists with only a constant number of colors.
%

\begin{reptheorem}{th:constant-colors-placement}
For any fixed $k$, {\sc Colored Gourd Placement} with at most $k$ distinct colors can be solved in randomized polynomial time.
\end{reptheorem}

\begin{proof}
The problem of {\sc Colored Gourd Placement} can be formulated as a colored matching problem on a board graph. 
The problem is ``colored'' in the following way: every edge has a color (corresponding to the two colors of the gourd that must be placed on the two hexagons connected by it) and for every color, we have a budget of how many edges of that color may be included in the matching.

The two-color case of this problem is known as red-blue matching, which is known to be solvable in randomized polynomial time \cite{nomikos2007randomized}. This algorithm, by itself, is not sufficient to solve colored gourd placement: even if the gourds can have only two colors, we will need to solve a trichromatic matching problem (corresponding to whether an edge will use a bichromatic or one of two monochromatic gourds).

However, as observed by Stamoulis \cite{stamoulis2014approximation}, the algorithm for red-blue matching of Nomikos et al. \cite{nomikos2007randomized} can easily be extended to handle an arbitrary (but constant) number of colors in randomized polynomial time.
\end{proof}

Thus, it is unlikely that the problem is {\sf NP}-complete for a constant number of colors.

In the {\sf NP}-completeness reduction in Theorem~\ref{th:npcomplete}, we can enlarge the board size polynomially, and the problem remains {\sf NP}-complete, but the number of colors reduces to a fractional power of the board size.
It is interesting to know if the number of colors required can be reduced to $O(\log n)$, for example, or if this case too admits a (randomized) polynomial-time algorithm.




\section{Numbered or Colored Gourd Reconfiguration: Tractability}
\label{sec:reconfig}




Now for a proper board $B$, we denote its board graph by $G_B$. 
Recall that $G_B$ is always Hamiltonian for a proper board $B$. 
A Hamiltonian cycle $H$ in $G_B$ defines a polygonal region, and we denote its triangulation by equilateral triangles by $T_H$. Furthermore, we denote the dual graph of $T_H$ by $G_{T_H}$ (see Figure~\ref{fig:structures}). 
Then we have the following basic observations:  



\begin{observation}
\label{obs:HvsGTH}
The edges of $H$ and of $G_{T_H}$ do not intersect. 
\end{observation}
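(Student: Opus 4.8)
The plan is to argue by a direct geometric/combinatorial inspection of how $H$, $T_H$, and $G_{T_H}$ sit inside the polygonal region $R$ bounded by the Hamiltonian cycle $H$. Recall the setup: $H$ is a Hamiltonian cycle in the board graph $G_B$; its vertices are centers of hexagons, and its edges are unit segments joining centers of adjacent hexagons, so these edges are exactly edges of the triangular lattice. The cycle $H$ therefore bounds a polygon $R$ whose vertices are lattice points, and $T_H$ is the triangulation of $R$ into unit equilateral triangles of that lattice. The dual graph $G_{T_H}$ has one vertex per triangle of $T_H$, placed at the triangle's centroid, and an edge between two triangles sharing a common edge of the lattice. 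So every edge of $G_{T_H}$ crosses exactly one interior lattice edge of $T_H$, namely the shared edge of the two triangles it connects, and it crosses nothing else.

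First I would fix representatives: draw each edge of $G_{T_H}$ as the straight segment between the two centroids of the triangles it joins. This segment is contained in the (closed) union of those two triangles, meets the lattice grid only at the midpoint of their shared edge, and in particular lies in the interior of $R$ except possibly touching $\partial R$ — but since the two triangles are both interior triangles of $T_H$, the segment actually lies in the open region bounded away from $\partial R$ wherever it would matter. The edges of $H$, on the other hand, are exactly the boundary edges of the polygon $R$: each edge of $T_H$ that lies on $\partial R$ is an edge of $H$, and conversely. Thus an edge $e_H$ of $H$ lies on $\partial R$, while an edge $e_D$ of $G_{T_H}$ lies in the interior of $R$ and only meets the lattice at a single interior midpoint. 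Hence they cannot cross.

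The key step to make this airtight is the claim that every edge of $H$ is a boundary edge of the triangulated region $T_H$ and every edge of $G_{T_H}$ crosses only an interior (non-boundary) edge of $T_H$. The first half is essentially the definition of how $T_H$ is obtained from $H$ (one triangulates the polygon whose boundary is $H$). The second half follows because $G_{T_H}$ connects pairs of triangles both of which belong to $T_H$ — a boundary edge of $R$ bounds only one triangle of $T_H$, so it is not dual to any edge of $G_{T_H}$. Combining: an edge of $H$ lives on $\partial R$ and an edge of $G_{T_H}$ meets the $1$-skeleton of $T_H$ only at an interior point of $R$, so the two edge sets are disjoint and in fact non-crossing. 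I would present this as two short bullet-free paragraphs plus possibly a sentence pointing to Figure~\ref{fig:structures} for intuition.

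The main obstacle, such as it is, is bookkeeping about the boundary: one must be slightly careful that an edge of $G_{T_H}$ drawn as a centroid-to-centroid segment does not accidentally run along, or touch, a boundary edge of $R$ (which would be an edge of $H$). This is handled by noting that such a dual edge is transversal to the unique lattice edge it crosses and that edge is interior, so the dual segment stays strictly inside $R$ near that crossing and near its endpoints (the centroids are interior points of $R$); there is no shared sub-segment or shared endpoint with any edge of $H$. Everything else is immediate from the definitions of $T_H$ and $G_{T_H}$, so I would keep the write-up to a few lines.
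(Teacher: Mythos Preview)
Your argument is correct. The paper states this observation without proof, treating it as immediate from the definitions, so there is nothing to compare against; your boundary-versus-interior reasoning (edges of $H$ are exactly the boundary edges of $T_H$, while dual edges cross only interior edges of $T_H$) is precisely the right justification. One small sharpening you could add, matching the paper's later use in Observation~\ref{obs:reverse-reasoning} and Observation~\ref{obs:4sides}: in this specific embedding the centroid of each triangle of $T_H$ is a corner of the hexagonal tiling and each dual edge coincides with a hexagon side, whereas each edge of $H$ joins two hexagon centers and passes through the midpoint of a hexagon side perpendicular to it --- so the two edge sets live on transversal $1$-complexes and meet only if a dual edge crosses the very lattice edge an $H$-edge occupies, which your interior/boundary dichotomy rules out. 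Your careful handling of the ``touching the boundary'' worry is more than sufficient.
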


\begin{observation}
\label{obs:counts}
If $B$ has $2n+1$ cells, then $G_B$ has $2n+1$ vertices, $H$ has length $2n+1$, $T_H$ has $2n-1$ triangles, and $G_{T_H}$ has $2n-1$ nodes.
\end{observation}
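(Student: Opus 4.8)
The plan is to use Euler's formula together with the fact that $T_H$ is a triangulation of a simple polygon. First I would establish the vertex and cycle-length counts, which are immediate: $G_B$ has one vertex per cell, hence $2n+1$ vertices, and since $B$ is proper its board graph is Hamiltonian, so $H$ visits every vertex exactly once and therefore has length $2n+1$. The region bounded by $H$ is a simple polygon $P$ whose vertices are exactly the $2n+1$ hexagon-centers, each of degree two along the boundary.

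The key step is counting the triangles of $T_H$. Since every bounded face of the board graph is a unit equilateral triangle (stated in the preliminaries), the triangulation $T_H$ uses no Steiner points: its vertex set is exactly the $2n+1$ boundary vertices of $P$, and it has no interior vertices. For a triangulation of a simple polygon with $v$ vertices, all on the boundary and none in the interior, the number of triangles is $v-2$. Applying this with $v = 2n+1$ gives $2n-1$ triangles. Alternatively, one can derive this directly from Euler's formula $V - E + F = 2$ for the plane graph $T_H$: with $V = 2n+1$, with $F = (\text{number of triangles}) + 1$ for the outer face, and with the edge-face incidence count $3(F-1) = 2E - (2n+1)$ (each bounded face is a triangle, each interior edge borders two faces, each of the $2n+1$ boundary edges borders one bounded face), solving the system yields exactly $2n-1$ triangles and $4n-1$ edges.

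Finally, $G_{T_H}$ is by definition the dual graph of $T_H$ restricted to its bounded faces, so it has one node per triangle of $T_H$, giving $2n-1$ nodes. The only point requiring a little care — and the one I would flag as the main (minor) obstacle — is justifying that $T_H$ has no interior vertices, i.e., that triangulating the polygon bounded by $H$ into equilateral triangles is forced and uses only the original hexagon-centers; this follows because the polygon's interior, as a subset of the plane drawn on the board graph, is already partitioned by the board graph's bounded faces into unit equilateral triangles whose corners are hexagon-centers lying inside or on $H$, and $H$ being a Hamiltonian cycle means no such corner lies strictly inside $H$. Once that is in place, all four counts follow by the arithmetic above.
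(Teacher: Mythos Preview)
Your argument is correct. The paper states this as a bare observation with no proof at all, treating the counts as immediate from the definitions; your write-up simply makes explicit the one nontrivial step, namely that the equilateral triangulation $T_H$ has no interior vertices, and then applies the standard $v-2$ triangle count (or, equivalently, Euler's formula). Your justification for the no-interior-vertices claim is the right one: every hexagon center of $B$ lies on $H$ because $H$ is Hamiltonian, and no hexagon center outside $B$ can lie strictly inside the polygon bounded by $H$ because $B$ is hole-free. Once that is established, the arithmetic you give (either via the $v-2$ formula or via $V-E+F=2$ with $3t = 2E-(2n+1)$) is routine and checks out, yielding $2n-1$ triangles and $4n-1$ edges as you state.
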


\begin{lemma}
\label{lem:tree}
$G_{T_H}$ is a tree of maximum degree $3$.
\end{lemma}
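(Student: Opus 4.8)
The plan is to show first that $G_{T_H}$ is connected, then that it is acyclic, and finally that its maximum degree is $3$; the degree bound is immediate from the fact that each triangle of $T_H$ has exactly three sides, so its dual node has at most three neighbors. Connectivity follows because $T_H$ triangulates a simply connected polygonal region (the interior of the Hamiltonian cycle $H$, which bounds a hole-free region): the dual graph of any triangulation of a simply connected polygon is connected, since any two triangles can be joined by a path through shared edges (walking along a straight segment between interior points of the two triangles crosses a sequence of interior edges, giving the dual path).

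For acyclicity I would use Observation~\ref{obs:HvsGTH} together with a counting argument based on Observation~\ref{obs:counts}. By Observation~\ref{obs:counts}, $G_{T_H}$ has $2n-1$ nodes. Each interior edge of the triangulation $T_H$ corresponds to exactly one edge of $G_{T_H}$, and I would count these: the triangulation $T_H$ has $2n-1$ triangles, hence $3(2n-1)$ triangle-sides counted with multiplicity; each of the $2n+1$ edges of $H$ is a side of exactly one triangle, while each interior edge is a side of exactly two triangles, so the number of interior edges is $\bigl(3(2n-1)-(2n+1)\bigr)/2 = (4n-4)/2 = 2n-2$. Thus $G_{T_H}$ is a connected graph on $2n-1$ nodes with $2n-2$ edges, which forces it to be a tree.

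The main obstacle — really the only nontrivial point — is justifying that the region bounded by $H$ is simply connected so that the "dual of a triangulation of a polygon is a tree/connected" facts apply cleanly, and that $T_H$ really is a triangulation of exactly that region (no triangle sticks out, every triangle lies inside $H$). This is where hole-freeness of the board and the definition of $T_H$ as the equilateral-triangle triangulation of the polygonal region enclosed by $H$ are used; once that is pinned down, the edge-count identity above does all the work, and the argument is purely combinatorial. An alternative route that avoids invoking the polygon fact is to prove connectivity and the edge count as above and then note that any connected graph with one fewer edge than vertices is a tree, but one still needs connectivity, for which the geometric "two triangles are joined by crossing a chain of interior edges" observation is the cleanest justification.
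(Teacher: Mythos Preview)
Your argument is correct. The degree bound and connectivity pieces match the paper's proof essentially verbatim (the paper also invokes simple-connectedness of the interior of $H$ for connectivity). Where you diverge is in the acyclicity step: the paper argues directly and topologically---since $B$ is hole-free, no triangular-grid vertex lies strictly inside $H$, so the dual cannot wind around an interior vertex and hence has no cycle---whereas you instead invoke Observation~\ref{obs:counts} and do an Euler-style edge count to get $2n-2$ dual edges on $2n-1$ nodes, forcing a tree once connectivity is known. Both routes are sound; yours is a clean combinatorial shortcut that offloads the topological content onto Observation~\ref{obs:counts} (whose triangle count $2n-1$ already encodes the ``no interior vertices'' consequence of hole-freeness), while the paper's version keeps the role of hole-freeness explicit at the point where it is used. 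Your mention of Observation~\ref{obs:HvsGTH} is not actually needed for the count---the identification of dual edges with interior edges of $T_H$ is just the definition of the dual---so you can drop that reference.
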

\begin{proof}
We examine $B$, $H$, and $T_H$ to obtain properties for $G_{T_H}$.
Since $T_H$ is a subtriangulation of the equilateral triangular grid and $H$ is a simple cycle on this grid, for any triangle $t$ of $T_H$, zero, one, or two
of its sides coincide with edges of $H$. These counts correspond directly to the degree of the node corresponding to $t$, which will be three, two, or one,
respectively. Hence, $G_{T_H}$ has maximum degree three. 
Since $H$ is a simple cycle on a triangular grid, the interior of $H$ is simply-connected and hence $G_{T_H}$ is connected.
Since $B$ is hole-free, $G_{T_H}$ cannot have a cycle. Hence, $G_{T_H}$ is a tree.
\end{proof}

\begin{figure}[htb]
\centering
\scalebox{0.90}{\includegraphics{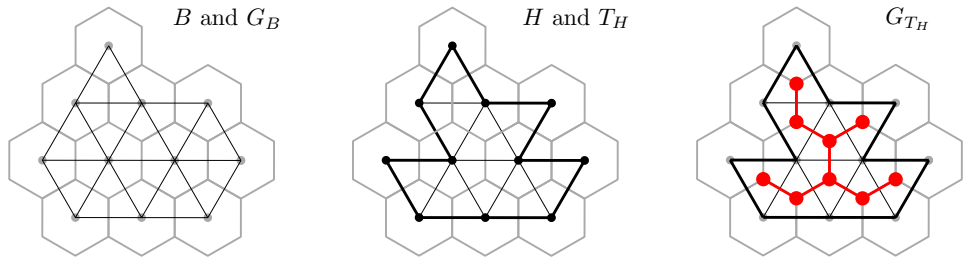}}
\caption{A board $B$ with its board graph $G_B$ (left), a Hamiltonian cycle $H$ in $G_B$ and the equilateral triangulation $T_H$ interior to $H$ (middle), and the dual graph $G_{T_H}$ of $T_H$ (right).}
\label{fig:structures}
\end{figure}

In the following two subsections, 
we show that for any proper board of size $2n+1$ ($n\geq 1$), any two configurations of $n$ numbered or colored gourds can be reconfigured into each other by a sequence of moves of the three types. We first present an  $O(n^3)$-moves algorithm to show how a sequence of moves is constructed; then we improve it to quadratic, which is optimal, by utilizing several properties of the dual graph $G_{T_H}$.

\subsection{An \boldmath{$O(n^3)$}-move algorithm}

The algorithm works in three phases. In phase 1, we make the gourds of a given configuration aligned with a Hamiltonian cycle $H$ by a sequence $S_1$ of moves. In phase 2, we rearrange (sort) the gourds along this cycle to another order and with some gourds in opposite orientation by a sequence $S_2$ of moves. In phase 3, we un-align the gourds from $H$ into the target configuration by a sequence $S_3$ of moves. The final sequence is $(S_1, S_2, S_3)$.
Sequence $S_3$ is found similar to $S_1$, but then as a reversed sequence by aligning the gourds of the target configuration with $H$. Given the Hamiltonian cycle after $S_1$ and the one before $S_3$ starts, we know how to reconfigure $H$ to compute $S_2$.

The following lemma shows how to handle phase 1 (and phase 3 reversed) of the algorithm. 

\begin{lemma}
\label{th:hamiltonian}
Let $B$ be a proper board of size $2n+1$ and let $H$ be any Hamiltonian cycle of $G_B$. 
Then any configuration of $n$ gourds on $B$ can be reconfigured using $O(n^2)$ moves so that all gourds align with edges of $H$. 
\end{lemma}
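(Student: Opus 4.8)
The plan is to reconfigure an arbitrary gourd configuration so that every gourd lies along an edge of the fixed Hamiltonian cycle $H$, proceeding one gourd at a time in a way that never disturbs gourds already placed. First I would set up the key structural tool: the dual tree $G_{T_H}$ from Lemma~\ref{lem:tree}. The triangles of $T_H$ partition the interior of $H$, and a gourd that is \emph{not} aligned with $H$ occupies two cells whose connecting edge is a diagonal of $T_H$, i.e.\ an edge of $G_{T_H}$; a gourd that \emph{is} aligned with $H$ corresponds to an edge of $H$, which by Observation~\ref{obs:HvsGTH} is disjoint from $G_{T_H}$. So ``all gourds aligned with $H$'' is exactly the statement that no gourd sits on a $G_{T_H}$-edge, and the empty cell is then the unique vertex of $G_B$ not covered. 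The idea is to process the tree $G_{T_H}$ from the leaves inward, repeatedly fixing the gourd nearest to a leaf triangle.

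The core subroutine I would prove is: given the empty cell somewhere, and a target edge $e = uv$ of $H$ which we want a gourd to occupy, we can bring the empty cell to $u$ (or $v$) and then perform one move that places a gourd onto $e$, using $O(n)$ moves, \emph{while only moving gourds through cells that we are still allowed to disturb}. Moving the empty cell along $H$ is the basic operation: walking $E$ one step along $H$ costs $O(1)$ moves locally near that step (a slide or a turn or a pivot involving the gourd currently adjacent to $E$ along $H$), so $E$ can travel to any cell in $O(n)$ moves. Once $E$ is at an endpoint of $e$, if the gourd currently covering the other endpoint of $e$ is oriented along an edge of $H$ incident to that endpoint, a single turn or pivot swings it onto $e$; if it is oriented along the $G_{T_H}$-diagonal out of that endpoint, we first walk $E$ around so as to rotate that gourd into an $H$-aligned position, then proceed. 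To make ``process from the leaves'' precise: pick a leaf of the current tree $G_{T_H}$, corresponding to a triangle $t$ with two sides on $H$; these two $H$-edges share a vertex $w$ that is a ``corner'' of the region. I would argue that we can always place a gourd on one of $t$'s $H$-sides (covering $w$) using only cells outside the already-finished part, and then delete $t$ from $T_H$ — geometrically, removing $t$ corresponds to flipping the diagonal, i.e.\ contracting that leaf in $G_{T_H}$; the remaining region is triangulated by $T_H \setminus \{t\}$ and its dual is $G_{T_H}$ with the leaf removed, still a tree by induction. After $n$ leaf-removals (Observation~\ref{obs:counts} gives $2n-1$ triangles, but each aligned gourd ``kills'' one diagonal and the process terminates when the tree is gone), every gourd is $H$-aligned.

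For the move count: each of the $n$ gourds is placed by one leaf-processing step costing $O(n)$ moves (dominated by walking $E$ across the board and the $O(1)$ local rotations), so the total is $O(n^2)$, as claimed. The main obstacle I anticipate is the bookkeeping that placing the next gourd does not disturb the ones already aligned: I would handle this by maintaining the invariant that the finished gourds occupy exactly the $H$-edges corresponding to the triangles already removed, and showing that the ``active region'' (the still-triangulated polygon) is itself a proper sub-board in which $E$ can roam freely and in which the leaf-corner $w$ is accessible; this needs a small case analysis of the shape of a leaf triangle and its neighbour in $G_{T_H}$ (the neighbour has degree $1$, $2$, or $3$), together with a short argument that the sharp-turn-versus-pivot subtleties from Observation~\ref{obs:threecell_board} do not obstruct rotating a gourd out of a diagonal position. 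A secondary subtlety is the base case: when $G_{T_H}$ has shrunk to one or two triangles the region is a single cycle or a tiny board, and one checks directly (as in the small-board remarks) that any remaining gourd can be aligned. Aside from this, everything reduces to the elementary fact that $E$ can be pushed to any cell along $H$ in linearly many moves.
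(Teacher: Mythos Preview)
Your proposal is substantially more elaborate than the paper's proof, and the elaboration introduces a genuine gap rather than buying anything.

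The paper's argument is almost a one-liner. Look at the vertex $\nu$ counterclockwise from the empty cell $\varepsilon$ on $H$. It carries one half of some gourd. Pull that gourd so that this half moves to $\varepsilon$ and the other half moves to $\nu$; this is one slide, one turn, or two pivots. If the gourd was already aligned with $H$, it is still aligned (it just shifted one step along $H$); if it was not aligned, it now is. Either way $\varepsilon$ advances along $H$. After at most $n$ consecutive ``already aligned'' steps you must encounter a misaligned gourd (if any remain), so each of the $n$ gourds gets aligned within $O(n)$ moves, giving $O(n^2)$ total. The crucial point is that cycling $E$ around $H$ \emph{never} un-aligns a gourd, so no freezing, no sub-boards, and no tree structure are needed.

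Your leaf-peeling scheme, by contrast, tries to freeze each gourd once placed and then work in a shrinking region, and this is where the gap is. Placing a gourd on an $H$-edge $wv$ of a leaf triangle $t$ covers \emph{two} vertices, but deleting $t$ removes only the corner $w$ from the cycle; the vertex $v$ remains on the shortened boundary yet is now occupied by a gourd you have declared frozen. You therefore cannot walk $E$ freely along the new boundary without moving that gourd. If you relax the invariant and let frozen gourds slide along $H$ (which does keep them aligned), you have simply recovered the paper's argument and the entire $G_{T_H}$ apparatus is superfluous. Your counting is also off: there are $2n-1$ triangles but only $n$ gourds, and ``$n$ leaf-removals'' does not exhaust the tree; the parenthetical about each aligned gourd ``killing one diagonal'' does not explain how or when the process terminates. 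Finally, the identification of a misaligned gourd's edge with ``an edge of $G_{T_H}$'' is at best a bijection, not an equality: the gourd sits on an interior edge of $T_H$, which is \emph{dual} to an edge of $G_{T_H}$.
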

\begin{proof}

Let any configuration of gourds on $B$ be given. One cell of $B$ is the empty cell $E$, corresponding to a vertex $\eps$ in $G_B$. The vertex $\nu$ counterclockwise from $\eps$ on $H$ contains one half of a gourd.
Now, (i) if that gourd is not aligned with $H$, then we move it to make it aligned: the gourd half on $\nu$ moves to $\eps$ and the other half moves to $\nu$ by one slide move, one turn move, or two pivot moves; 
(ii) if the gourd is aligned with $H$ already, we move it along $H$, keeping it aligned, and placing the empty cell two positions counterclockwise along $H$. 

We repeat until all gourds are aligned. Suppose we are in case (ii) $n$ times in sequence. then all gourds are aligned with $H$. If not all gourds are aligned, we will be in case (i) after less than $n$ moves of case (ii), and we will align one more gourds with $H$. 
This implies that we can align one gourd in $O(n)$ moves, which takes $O(n^2)$ moves in total for $n$ gourds. 
\end{proof}



In our reconfiguration algorithms, the following observation is easy but essential. 
\begin{observation}
Suppose that all gourds are aligned with edges of any Hamiltonian cycle $H$ of $G_B$. Then we can move one of the two gourds adjacent (in $H$) to the empty cell to cover the empty cell, and still be on an edge of $H$, by a single slide or turn move, or two consecutive pivot moves (to make a sharp turn). By moving every gourd once in (counter)clockwise direction, we move the empty cell one space (counter)clockwise. 
\end{observation}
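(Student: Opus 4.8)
The plan is to reduce the single‑move claim to a purely local case analysis at the empty cell, and then iterate. Since all gourds are aligned with $H$, the empty cell is a vertex of $H$; list the vertices of $H$ in cyclic order as $\eps=v_0,v_1,\dots,v_{2n}$, so $\eps$ has $H$-neighbours $v_1$ and $v_{2n}$. The gourd covering $v_1$ occupies an edge of $H$ incident to $v_1$; the only such edges are $v_0v_1$ and $v_1v_2$, and $v_0=\eps$ is empty, so this gourd occupies the $H$-edge $v_1v_2$. I want to move it onto the $H$-edge $v_0v_1$, which covers $\eps$ and vacates $v_2$; the gourd covering $v_{2n}$ is handled by the mirror‑symmetric argument.

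Next I would classify the configuration of the three consecutive $H$-vertices $v_2,v_1,v_0$. Because they are consecutive vertices of a simple cycle on the triangular grid $G_B$, the angle $\angle v_2v_1v_0$ is a positive multiple of $60^\circ$ that is at most $180^\circ$ (it is nonzero since $H$ is simple and has length $2n+1\ge 3$), hence it equals $180^\circ$, $120^\circ$, or $60^\circ$. If it is $180^\circ$, the three cells are collinear and consecutive, so a single \emph{slide} carries the gourd from $v_1v_2$ to $v_0v_1$. If it is $120^\circ$, these three cells are exactly the configuration a \emph{turn} requires, so a single turn does it. If it is $60^\circ$, the cells $v_0,v_1,v_2$ form a unit triangle; here the required motion is a sharp turn of the gourd, which (since all three cells are pairwise adjacent) is realised by two consecutive \emph{pivots}: first pivot about $v_2$, sending the $v_1$-end onto $v_0$, leaving the gourd on $v_2v_0$ with $E$ at $v_1$; then pivot about $v_0$, sending the $v_2$-end onto $v_1$, leaving the gourd on $v_0v_1$ with $E$ at $v_2$. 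In all three cases the gourd finishes on the edge $v_0v_1$ of $H$ and the empty cell has moved from $v_0$ to $v_2$.

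For the global statement I would apply the single‑move claim $n$ times, at each step to the gourd lying on the clockwise side of the current empty cell. Each such move leaves every other gourd untouched and places the moved gourd on an edge of $H$, so the configuration is again ``all gourds aligned with $H$'', with the empty cell advanced two positions clockwise around $H$ and one more gourd having moved (a short induction shows the gourds cycle through all $n$ of them, the empty position running $v_0\to v_2\to v_4\to\cdots\to v_{2n}$). After all $n$ moves the empty cell has travelled $2n$ positions; since $2n\equiv-1\pmod{2n+1}$, it has in fact moved exactly one position counterclockwise, and every gourd has moved exactly once (counterclockwise). Choosing instead the gourd on the counterclockwise side at each step reverses both directions, which gives the ``(counter)clockwise'' form of the statement.

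The only delicate point is the case analysis of the previous paragraph: verifying that three consecutive cells of a Hamiltonian cycle on the triangular grid can meet only at $60^\circ$, $120^\circ$, or $180^\circ$, and that in each case the prescribed slide, turn, or pair of pivots is a legal move with the gourd again on an edge of $H$. Once that is in place, the bookkeeping for the iteration — in particular the congruence $2n\equiv-1\pmod{2n+1}$ yielding the net one‑step shift — is routine.
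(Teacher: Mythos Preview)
Your proposal is correct and is exactly the argument the paper has in mind; the paper simply states this observation without proof (calling it ``easy but essential''), and the three-case local analysis you spell out --- slide for a $180^\circ$ bend, turn for $120^\circ$, two pivots for a $60^\circ$ sharp turn --- is precisely the reasoning the paper already uses in the proof of Lemma~\ref{th:hamiltonian}. Your $2n\equiv -1\pmod{2n+1}$ bookkeeping for the global claim is likewise the intended routine verification.
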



We proceed with phase 2, assuming that all gourds are aligned with $H$.
We examine $H$ to find a good place to make reconfigurations. Since $G_{T_H}$ is a tree, it has a leaf and its dual is a triangle in $T_H$ that has two sides on $H$ (see Figure~\ref{fig:case-ends}). 
Then we have the following lemma. 

\begin{figure}[hbt]
\centering
\scalebox{0.90}{\includegraphics{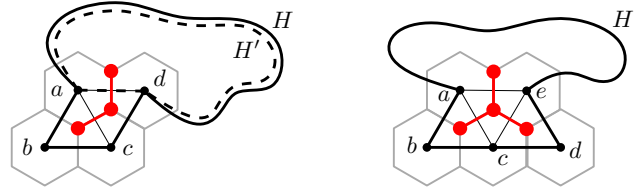}}
\caption{Substructures in the tree  $G_{T_H}$ (in red), the corresponding shapes of $H$ (in bold black), and the hexagons (in grey) containing gourd parts.}
\label{fig:case-ends}
\end{figure}

\begin{replemma}{lem:subconfig}
$G_{T_H}$ has at least one of the following two substructures: (i) a leaf adjacent to a degree-2 node, or (ii) a degree-3 node adjacent to two leaves.
\end{replemma}

\begin{proof}
Assume for a contradiction that every leaf is adjacent to a degree-3 node, and no two leaves are adjacent to the same degree-3 node. Then the tree $G_{T_H}$ has as least as many degree-3 nodes as leaves, which is not possible by an easy counting argument: $G_{T_H}$ always has exactly two more leaves than degree-3 nodes.
Hence the opposite of our assumption is true, which is that $G_{T_H}$ has a substructure of type (i) or of type (ii).
\end{proof}

Now we look at the implication of Lemma~\ref{lem:subconfig} for $H$. 
For Hamiltonian cycles $H$ that contain a substructure of type (i) (Figure~\ref{fig:case-ends}, left),
we have four consecutive vertices of $H$
denoted $abcd$ such that $a$ and $d$ are also adjacent. Hence, removal of $bc$ from $H$ yields a new Hamiltonian cycle $H'$ that is two vertices shorter.
By moving gourds along $H$, we can get any gourd to lie on $bc$. Then, by moving gourds along $H'$, we can place any two gourds adjacent in $H'$ such that one covers $a$ and the other covers $d$. Now we can move along $H$ again, 
intentionally inserting the gourd on $bc$ anywhere in the cycle defined by $H'$. In particular, we can swap two adjacent gourds in $H$ and on $abcd$ using $O(n)$ moves. Also, when the empty cell is the hexagon of node $a$, we can reverse the gourd at $bc$ to get a desired orientation. 

The substructure of type (ii) (Figure~\ref{fig:case-ends}, right) is even simpler. In this case, we have five consecutive vertices $abcde$ in $H$, which must lie as shown in the figure. 
We can move gourds along $H$ and get any two adjacent gourds, plus the empty cell, on $abcde$. Using just these five hexagons, the two gourds can be reversed, and swapped with each other, so that they get a different order along $H$, in $O(1)$ moves. Hence, the substructure allows us to perform inversions of adjacent elements in a cyclic sequence, which is sufficient to get any sorted order.

\begin{theorem}
Let $B$ be a proper board of size $2n+1$. 
Then any two configurations of the same set of $n$ numbered/colored gourds on $B$ can be reconfigured into each other in $O(n^3)$ moves.
\end{theorem}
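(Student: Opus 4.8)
The plan is to assemble the theorem directly from the three-phase scheme already set up in the section, invoking the lemmas that have just been proved. Given two configurations $C_0$ and $C_1$ of the same $n$ gourds on the proper board $B$, fix once and for all a Hamiltonian cycle $H$ of $G_B$; this exists because $B$ is proper. By Lemma~\ref{th:hamiltonian}, there is a sequence $S_1$ of $O(n^2)$ moves taking $C_0$ to a configuration $C_0'$ in which every gourd is aligned with an edge of $H$; applying the same lemma to $C_1$ gives a sequence $S_3'$ of $O(n^2)$ moves taking $C_1$ to an $H$-aligned configuration $C_1'$, and we let $S_3 = (S_3')^{-1}$ be its reversal, which takes $C_1'$ to $C_1$ and is also $O(n^2)$ moves since each of the three move types is reversible. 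So it remains to exhibit a sequence $S_2$ of moves reconfiguring $C_0'$ into $C_1'$ while keeping all gourds $H$-aligned throughout; then $(S_1, S_2, S_3)$ is the desired sequence.

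For phase 2 I would first reduce ``$H$-aligned configuration'' to a purely combinatorial object: a cyclic arrangement of $n$ gourds around the $2n+1$ hexagons of $H$, i.e. the cyclic order of the gourds together with, for each gourd, which of its two ends precedes the other along $H$ (an orientation bit), plus the position of the empty cell. The moves available that preserve $H$-alignment are: (a) rotating the whole necklace — moving every gourd once in a common direction shifts the empty cell by one hexagon, as noted in the observation following Lemma~\ref{th:hamiltonian}; and (b) the local operations supplied by Lemma~\ref{lem:subconfig}: using a type-(i) or type-(ii) substructure of $G_{T_H}$ we can swap two cyclically adjacent gourds and, when the empty cell is suitably placed, reverse the orientation of a gourd, each in $O(n)$ moves. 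The claim is that (a) and (b) generate all $H$-aligned configurations with the same multiset of gourds.

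To prove that claim I would argue by a standard sorting argument. Using adjacent transpositions (bubble sort) we can realize any target cyclic order of the gourds in $O(n^2)$ swaps, hence $O(n^3)$ moves; the cyclic (as opposed to linear) nature is harmless because a full rotation is available as a free primitive, and one extra transposition across the ``seam'' handles the discrepancy between linear and cyclic sorting. Interleaved with this, whenever a gourd reaches its final cyclic slot with the wrong orientation we perform a reversal from Lemma~\ref{lem:subconfig}, again $O(n)$ moves each, $O(n^2)$ in total; here one must check that performing a reversal does not disturb already-placed gourds, which follows because the reversal in the type-(i)/(ii) gadget is local and we can first rotate the necklace to bring the relevant gourd into the gadget, then rotate back. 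Finally, after the cyclic order and all orientations match $C_1'$, a single application of primitive (a), repeated at most $2n+1$ times, moves the empty cell into its target hexagon. Summing: $S_1, S_3$ are $O(n^2)$, and $S_2$ is dominated by the $O(n^2)$ swaps at $O(n)$ moves apiece, giving $O(n^3)$ overall.

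The main obstacle I anticipate is bookkeeping rather than a conceptual gap: one must verify carefully that the ``move along $H$ / move along $H'$ / move back along $H$'' manoeuvre described before the theorem genuinely realizes an arbitrary adjacent transposition in the cyclic sequence without corrupting the orientations or positions of the other gourds, and that reversals and transpositions can be composed freely. A subtle point is that the gadget operations of Lemma~\ref{lem:subconfig} require the empty cell to be at a particular location relative to the gadget, so each application is really ``rotate the necklace to align the empty cell with the gadget, do the local operation, rotate back,'' and one should confirm these rotations are $O(n)$ moves and preserve alignment. Once these routine verifications are in hand, the move count and the correctness of $(S_1, S_2, S_3)$ follow immediately; I would also remark that since $C_0$ and $C_1$ use the same multiset of gourds, the combinatorial target for $S_2$ is always reachable, so no parity or reachability obstruction arises — a point worth stating explicitly, in contrast to the 15-puzzle.
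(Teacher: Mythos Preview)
Your proposal is correct and follows essentially the same three-phase scheme as the paper: align with $H$ via Lemma~\ref{th:hamiltonian}, sort along $H$ using the substructure supplied by Lemma~\ref{lem:subconfig}, then reverse an alignment sequence for the target. The only cosmetic difference is that the paper treats the two substructure types separately---it uses an insertion-sort style argument for type~(i) (park a gourd on $bc$, cycle along $H'$, reinsert) and bubble sort for type~(ii)---whereas you run bubble sort uniformly in both cases; since the paper already notes that an adjacent swap at the type~(i) gadget also costs $O(n)$, your unification is harmless. One small imprecision: your ``each in $O(n)$ moves'' for a swap is really an amortized statement, valid because consecutive bubble-sort swaps are at adjacent necklace positions so the shift between them is $O(n)$; an isolated swap of a pair far from the gadget would cost $O(n^2)$ to set up. This does not affect the $O(n^3)$ total.
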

\begin{proof}
Let $C_1$ and $C_2$ be the two configurations of the same $n$ numbered gourds in $B$. Choose a Hamiltonian cycle $H$ in $G_B$.
To convert $C_1$ into $C_2$, we use three phases: align the gourds of $C_1$ with $H$, then reconfigure $H$, and then use the reversed sequence of moves of aligning the gourds of $C_2$ with $H$ to get~$C_2$.

Phases 1 and 3 are discussed in Lemma~\ref{th:hamiltonian}, so we concentrate on phase~2. We know the gourd order and orientation after converting $C_1$ into $H$, and we know the gourd order and orientation before converting $H$ into $C_2$. So we must reconfigure these two orders and orientations of $n$ gourds into each other, and the discussion above showed how to do this. 

If a substructure of type (i) exists, we can get any gourd at $bc$ in $O(n^2)$ moves using $H$, and in another $O(n^2)$ moves we can insert it anywhere in the cyclic sequence using $H'$. We need to do this at most $2n$ times (compare to Insertion Sort). If a substructure of type (ii) exists, we can rearrange two gourds adjacent in $H$ and at $abcde$ with the empty cell in $O(1)$ time. Bringing one gourd out of $abcde$ and an adjacent one into $abcde$ takes $O(n)$ moves. We need to do $O(n^2)$ inversions of adjacent gourds to get to a desired order (compare to Bubble Sort). 
It is easy to see that $O(n^3)$ moves are sufficient in total. 
\end{proof}

\subsection{A worst-case optimal \boldmath{$O(n^2)$}-move algorithm}

We show that any two configurations of a set of $n$ gourds on a proper board can be transformed into each other using only quadratically many moves by developing the framework of the previous $O(n^3)$-move algorithm. Phase 1 is the same, but phase 2 is implemented more efficiently. To this end, we break the Hamiltonian cycle into two (sub)cycles, such that both have size a constant fraction of the original cycle: a \emph{balanced split}. Then with divide-and-conquer, the result follows.

We use several nice properties of $G_{T_H}$ to show that such a balanced split exists. These properties are derived from the underlying hexagonal grid board, and hence it is useful to have the reasoning from $G_{T_H}$ back to $B$ explicit:

\begin{observation}
\label{obs:reverse-reasoning}
The existence of a node $s$ in $G_{T_H}$ means that the surrounding triangle in $T_H$ exists and its three corners are visited by $H$. These corners correspond to hexagons on $B$ that meet in a common point of the hexagonal board, which is node $s$.
\end{observation}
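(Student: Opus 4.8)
The plan is to establish Observation~\ref{obs:reverse-reasoning} by carefully tracking the correspondence between the three structures $B$, $T_H$, and $G_{T_H}$, all of which live on the same underlying equilateral triangular grid (the grid whose vertices are the hexagon centers of $B$). First I would recall that a node $s$ of $G_{T_H}$ is by definition the dual vertex of some triangle $t$ in the triangulation $T_H$. Since $T_H$ is the triangulation of the interior of $H$ by the unit equilateral triangles of this grid, the triangle $t$ is one of those grid triangles, and its existence in $T_H$ is exactly the statement that $t$ lies in the interior (or on the boundary) of the polygonal region bounded by $H$.

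Next I would unpack what the three corners of $t$ are: each corner is a vertex of the triangular grid, i.e., the center of a hexagon of $B$. Because $H$ is a Hamiltonian cycle of $G_B$, it visits every vertex of $G_B$, hence every hexagon center, hence in particular all three corners of $t$. (One should note that these corners are vertices of $G_B$ — they cannot be grid points outside $B$ — precisely because $T_H$ only uses triangles whose interior is enclosed by $H$, and $H$ runs along edges of $G_B$, so all corners of such triangles are vertices of $G_B$.) This gives the middle assertion: the three corners of the surrounding triangle are visited by $H$.

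For the final assertion I would invoke the planar duality between the hexagonal tiling and the triangular grid: a vertex of the triangular grid is the point where exactly three hexagons of the tiling meet, and conversely. So the three grid vertices that are the corners of $t$ are themselves three distinct points of the board, and the ``common point of the hexagonal board'' referred to is simply one of these; more precisely the observation is read the other way — the hexagons corresponding to the three corners of $t$ all meet at the barycenter of $t$, and that meeting point is (the location of) the node $s$ in the standard straight-line embedding of $G_{T_H}$ on the barycenters of $T_H$. I would just spell out that $s$, embedded at the centroid of $t$, is the unique point of the plane incident to all three hexagons indexed by the corners of $t$.

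I do not expect any real obstacle here; the statement is essentially a bookkeeping lemma recording a chain of standard dualities (hexagon tiling $\leftrightarrow$ triangular grid, and $T_H \leftrightarrow G_{T_H}$), and the only point requiring a line of care is the claim that the corners of a $T_H$-triangle are genuinely vertices of $G_B$ and not stray grid points — which follows because $T_H$ is defined as the triangulation of the region \emph{bounded by} the Hamiltonian cycle $H$, whose edges lie on $G_B$. So the ``hard part'' is merely to phrase the triple correspondence cleanly enough that later arguments can cite it without re-deriving it.
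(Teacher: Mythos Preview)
Your unpacking is correct and matches exactly what the paper has in mind; the paper states this as an observation with no proof at all, treating the chain of dualities (node of $G_{T_H}$ $\leftrightarrow$ triangle of $T_H$ $\leftrightarrow$ three hexagon centers on $H$ $\leftrightarrow$ three hexagons of $B$ meeting at the centroid $s$) as immediate from the definitions and Figure~\ref{fig:structures}. Your one line of care about why the triangle corners are genuine vertices of $G_B$ is the only non-tautological content, and you handle it correctly.
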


When $H$ visits a vertex $v$ of $G_B$, it must enter and leave the hexagonal cell whose center is $v$.
Hence, by Observation~\ref{obs:HvsGTH} we have:

\begin{observation}
\label{obs:4sides}
For any hexagonal cell in a board $B$, the edges of $G_{T_H}$ overlap with at most four of its sides.
\end{observation}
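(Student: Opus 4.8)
The plan is to derive Observation~\ref{obs:4sides} directly from Observation~\ref{obs:HvsGTH} together with the Hamiltonicity of $H$. Fix a hexagonal cell $h$ of $B$ and let $v$ be its centre, a vertex of $G_B$. Since $H$ is a Hamiltonian cycle it passes through $v$, using exactly two edges of $G_B$ at $v$, and these lead to two \emph{distinct} neighbouring cells; the corresponding two edges of $H$ therefore cross two distinct sides $s_1$ and $s_2$ of $h$. I will show that no edge of $G_{T_H}$ overlaps $s_1$ or $s_2$; since $h$ has six sides, this gives the claimed bound of four.

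The only geometric input I need is an explicit description of where edges of $G_{T_H}$ sit. An edge of $G_{T_H}$ joins the centroids of two triangles of $T_H$ that share a grid edge $e$ of the triangular grid, so this segment is the perpendicular bisector of $e$; a short length computation in the triangular grid (hexagon side length equal to the circumradius $R$, grid edge length $R\sqrt{3}$, centroid-to-side-midpoint distance $R/2$) shows that it has length exactly $R$, is centred at the midpoint of $e$, and is orthogonal to $e$. Since $e$ is precisely the segment joining the centres of the two hexagonal cells flanking the side of $h$ that $e$ crosses, $e$ meets that side orthogonally at its common midpoint; hence the $G_{T_H}$-edge \emph{coincides} with that side of $h$ (this is exactly what ``overlap'' should mean here), and in particular it crosses $e$ at an interior point of both segments.

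With this in hand the argument finishes immediately. If some edge of $G_{T_H}$ overlapped the side $s_1$, then by the previous paragraph it would cross the grid edge joining $v$ to the centre of the cell on the other side of $s_1$; but that grid edge is one of the two edges of $H$ incident to $v$, so this $G_{T_H}$-edge would intersect $H$, contradicting Observation~\ref{obs:HvsGTH}. The same applies to $s_2$. Thus at least two of the six sides of $h$ are not overlapped by any edge of $G_{T_H}$, so at most four of them are.

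The main obstacle is essentially bookkeeping: pinning down the meaning of ``overlap'' and verifying the perpendicular-bisector and length facts, so that an overlapping $G_{T_H}$-edge is genuinely seen to cross (not merely touch an endpoint of) an edge of $H$ and Observation~\ref{obs:HvsGTH} can be applied cleanly. Everything else is immediate. As a consistency check the bound is tight: when the two $H$-edges at $v$ are consecutive in the cyclic order of grid edges around $v$ and the interior of $H$ lies on the large side, five of the six grid triangles at $v$ belong to $T_H$ and exactly four sides of $h$ are overlapped.
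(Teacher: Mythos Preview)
Your argument is correct and follows exactly the same route as the paper: $H$ visits the centre $v$ of every cell, the two $H$-edges at $v$ cross two distinct sides of the hexagon, and Observation~\ref{obs:HvsGTH} then forbids any $G_{T_H}$-edge from lying on those two sides. The paper states this in two lines and leaves the fact that $G_{T_H}$-edges coincide with hexagon sides to the figure; your explicit perpendicular-bisector/length computation just fills in that picture rigorously.
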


Now we show two key lemmas (Lemmas~\ref{lem:seven_or_more} and~\ref{lem:balanced-split}) before proving our main theorem.
Figure~\ref{fig:break7} shows the idea of Lemma~\ref{lem:seven_or_more}.
 
\begin{replemma}{lem:seven_or_more}
There exists a Hamiltonian cycle $H$ in $G_B$ for which $G_{T_{H}}$ has no (consecutive) sequence of seven or more degree-3 nodes.
\end{replemma}

\begin{proof}
Consider a Hamiltonian cycle $H'$ in $G_B$.
Regarding that the embedding of $G_{T_{H'}}$ lies on the hexagon sides (see Figure~\ref{fig:structures}), 
we call a sequence (path) of nodes in $G_{T_{H'}}$ \emph{zig-zag} if its inner nodes alternately make a left and a right turn (Figure~\ref{fig:break7}, left).
We first claim that for any $H'$, any sequence of four or more degree-3 nodes in $G_{T_H'}$ is always zig-zag. Assume the contrary, then there are two adjacent inner nodes that both make a left turn or both make a right turn.
Since the first and last nodes in the sequence also have degree-3,
some hexagonal cell of $B$ has five of its sides overlapped by edges of $G_{T_{H'}}$, a contradiction with Observation~\ref{obs:4sides}.

Let $S$ be the longest zig-zag sequence of degree-3 nodes in $G_{T_{H'}}$. If its length is less than seven, we are done. Otherwise, let $S'$ be a subsequence of $S$ of length seven. The middle five nodes of $S'$ are incident to a leaf, otherwise we violate Observation~\ref{obs:4sides}. Figure~\ref{fig:break7} shows the only possible configuration on the left, including the edges necessarily in $H'$.
By locally changing $H'$ as shown to the right, we reduce the number of degree-3 nodes and leaves by one each, and increase the number of degree-2 nodes by two.
We repeat this process as long as there are sequences of seven degree-3 nodes, proving the existence of $H$.
\end{proof}

\begin{figure}[hbt]
\centering
\scalebox{0.80}{\includegraphics{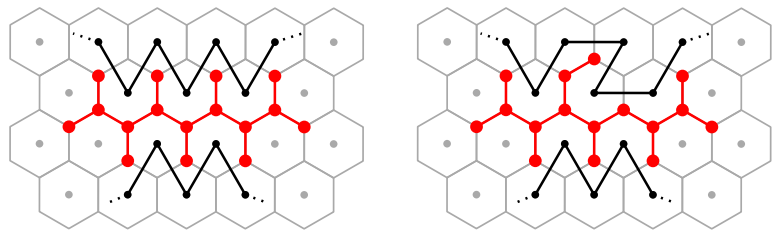}}
\caption{The seven degree-3 nodes in $G_{T_{H'}}$ (in red) and all edges necessarily in $H'$ (in black), given $G_{T_{H'}}$ (left). 
We can always change $H'$ locally to break this situation. 
The adapted Hamiltonian cycle $H$ and resulting tree $G_{T_{H}}$ (right).}
\label{fig:break7}
\end{figure}

\begin{replemma}{lem:balanced-split}
Let $G_{T_H}$ have $m$ nodes. 
If there is no sequence of seven or more degree-3 nodes in  $G_{T_{H}}$, 
then there exists a split of $G_{T_{H}}$ at a degree-2 node where both parts have size at least $m/96-7$. 
\end{replemma}

\begin{proof}
Let $G_{T_H}$ have $m$ nodes (and recall that $m=2n-1$ for a board of size $2n+1$).
Any tree
contains a node $u$ whose removal disconnects the tree into subtrees of size at most half of the size of the original tree. 
If $u$ has degree two, we are done. So assume $u$ has degree three. From $u$ we follow a simple path in $G_{T_{H}}$, always entering the largest subtree (but without going back since the path must be simple).  We stop as soon as we encounter a vertex of degree two. This happens at the latest at the seventh node, since by assumption, $G_{T_{H}}$ does not have a sequence of seven degree-3 nodes. 
Let $w$ be the degree-2 node we find. Notice that With the first step from $u$ we have three choices to choose a subtree; after that we have two choices at the next up to five degree-3 nodes.

We analyze the minimum size of the subtree that remains, when $w$ is removed. We start the analysis at $u$ and work our way towards $w$. The two smaller subtrees neighboring $u$ have sizes at most $\lfloor m/3\rfloor$ each, and the third subtree, which we enter on our path, has size at least $m-1-2\lfloor m/3\rfloor$ (the $-1$ is for $u$ itself). The next smaller subtree that we do not enter has size at most $\lfloor m/6\rfloor$ (see Figure~\ref{fig:count}). At the seventh node,
the smallest subtree has size at least
$m-7-2\lfloor m/3\rfloor - \lfloor m/6\rfloor - \lfloor m/12\rfloor- \lfloor m/24\rfloor - \lfloor m/48\rfloor- \lfloor m/96\rfloor \geq m/96\; -7 $. 
\end{proof}

\begin{figure}[htb]
\centering
\scalebox{0.90}{\includegraphics{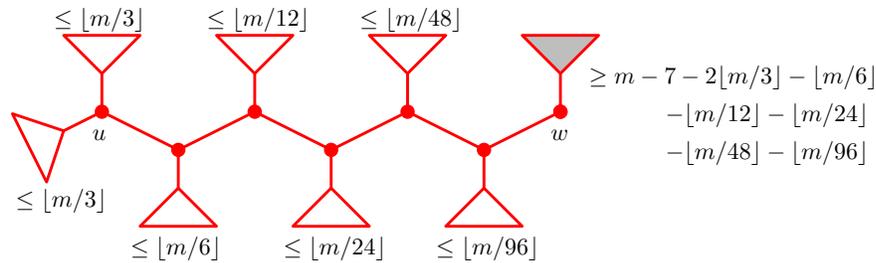}}
\caption{The size of the smaller subtree at a balanced split at a degree-2 node.}
\label{fig:count}
\end{figure}

A degree-2 node in $G_{T_{H}}$ is dual to a triangle $t$ in $T_{H}$ that has one edge in common with $H$. Using $t$, we can split $H$ into two cycles in two different ways (see Figure~\ref{fig:divide-and-conquer}).
Let $e_1$ be the edge of $t$ that it shares with $H$, and let $e_2$ and $e_3$ be the other two edges. Then the removal of $e_1$ from $H$ and the insertion of $e_2$ and $e_3$ gives two cycles that have exactly one vertex in common: the vertex $v_1$ of $t$ opposite to $e_1$ (see Figure~\ref{fig:divide-and-conquer}, left). 

\begin{figure}[bht]
\centering
\scalebox{0.85}{\includegraphics{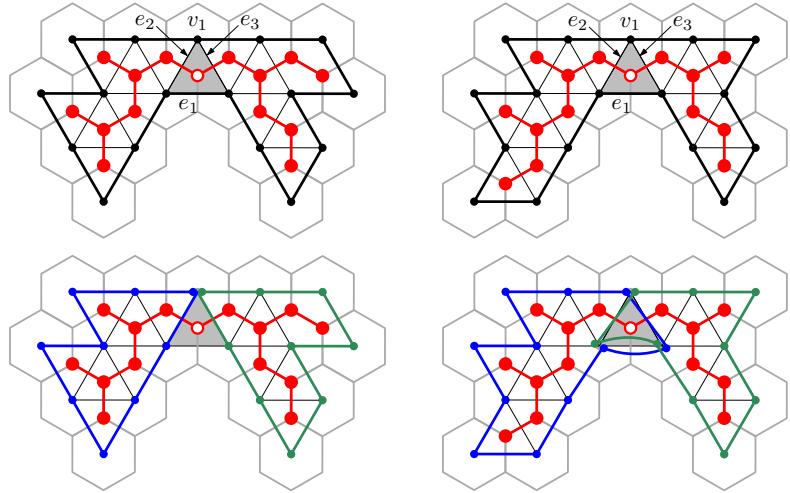}}
\caption{Splitting $H$ at a degree-2 node of $G_{T_{H}}$ into two Hamiltonian cycles of odd length, in one of two ways.}
\label{fig:divide-and-conquer}
\end{figure}

Both cycles have odd length or both cycles have even length. If both have even length, we use a different set of two odd-length cycles, where the three vertices of $t$ occur in both cycles. The edge $e_1$ is now in both cycles (see Figure~\ref{fig:divide-and-conquer}, right).
If $H$ has $m$ vertices, then the resulting cycles
have at most $m+3$ vertices together and the smaller one has size at least $m/96$.
We next show that both ways of splitting can be used in a divide-and-conquer algorithm.
We denote the two cycles $H_1$ and $H_2$, denote their lengths $m_1$ and $m_2$, respectively, and recall that the empty cell is called $E$.

The case with three shared vertices is easier. 
Assume that $E$ is in $H_1$. We can move any gourd in $H_1$ a bit closer to position $e_1$ in $O(m_1)$ moves, by moving the gourds along $H_1$. If we wish to move a gourd over a distance $k$ to be in position $e_1$, this is possible in  $O(km_1)$ moves.
Assume a subset of $j$ gourds $g^1_1,\ldots,g^1_j$ should go from $H_1$ to $H_2$, and an equal-size subset of gourds
$g^2_1,\ldots,g^2_j$ should go from $H_2$ to $H_1$. Assume both are numbered clockwise around their cycle, starting at $e_1$.
By moving the gourds in $H_1$ clockwise until $g^1_1$ is at $e_1$, then moving the gourds in $H_2$ (including $g_1^1$) clockwise until $g^2_1$ is at $e_1$, then switching to $H_1$ again, integrating $g_1^2$ into $H_1$, to get
$g^1_2$ at $e_1$, and so on, we need $O(m^2)$ moves to exchange any number of gourds between $H_1$ and $H_2$. With another $O(m^2)$ moves, we can get a gourd of our choice at $e_1$ and get $E$ anywhere.
This is sufficient to set up divide-and-conquer.

\begin{figure}[tb]
\centering
\scalebox{1.10}{\includegraphics{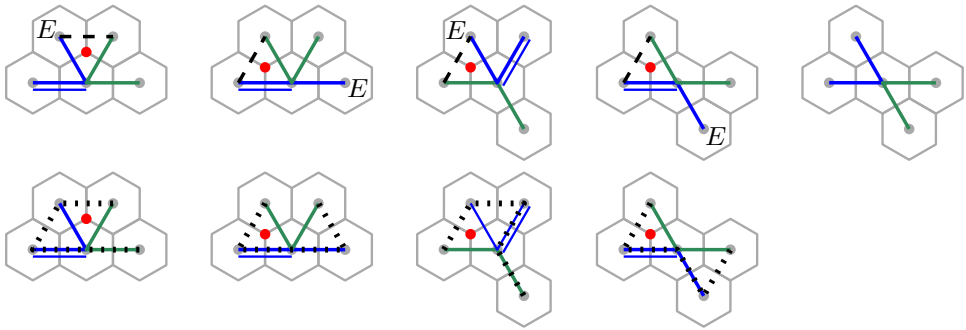}}
\caption{Two odd-length cycles with one vertex in common. The vertex $w$ dual to triangle $t$ is shown by a red dot, and in the top row, edge $e_1$ of $t$ is shown dashed. The top right case cannot occur from the split. Cycle $H_1$ is green; cycle $H_2$ is blue; cycle $H_1^+$ is the dotted extension of the green cycle shown in the bottom row. A gourd placement from $H_2$ to be included in $H_1$ is shown by an extra blue line along an edge of $H_2$. The location of the empty cell $E$ is also shown.}
\label{fig:merge-case-1}
\end{figure}

Now consider the case where $H$ is split into two cycles $H_1$ and $H_2$ that have only $v_1$ in common. 
We distinguish possible patterns how the two split cycles touch, and in all cases we show that we can 
include one gourd from the one cycle as an extra gourd into the other cycle by extending one of the two cycles (see Figure~\ref{fig:merge-case-1}).
Assume without loss of generality that $H_1$ can be extended. We rotate gourds through $H_2$ until a gourd that should be in $H_1$ is in a suitable position, and also the empty cell is suitably placed. Now we use the extended cycle $H^+_1$ of $H_1$ and move gourds through it until some gourd from $H_1$ (and $H^+_1$) is in that same suitable position, and also the empty cell.
Then we use cycle $H_2$ again to transfer the next gourd to $H_1^+$ and then $H_1$. Note that in Figure~\ref{fig:merge-case-1}, the cycle that is not extended, always has an edge aligned (coinciding) with the extension of the other cycle. This shared edge is the ``suitable position''. The other new cell in the extended cycle is the place where the empty cell should be.
Since we can include any gourd of the one cycle into the other cycle and vice versa, we have completed the merge step of the divide-and-conquer. We swap gourds just like in the first splitting case, in the order in which they occur along the cycles $H_1$ and $H_2$. The same analysis shows that the conquer is done in $O(m^2)$ moves.


The total number of moves needed to reconfigure the gourds along a cycle of length $m$ now follows a standard divide-and-conquer recurrence:
$T(m)\leq T(m_1)+T(m_2) +O(m^2)$, $T(O(1)) = O(1)$,
where $m_1+m_2=m+3$, and both $m_1\ge m/96-7$ and $m_2 \geq m/96 -7$ hold.  This recurrence solves to $O(m^2)$, which is $O(n^2)$ 
since $m=2n-1$ for a board of size $2n+1$. 


For completeness we illustrate in Figure~\ref{fig:lower-bound} that this is the best possible.



\begin{figure}[hbt]
\centering
\scalebox{0.80}{\includegraphics{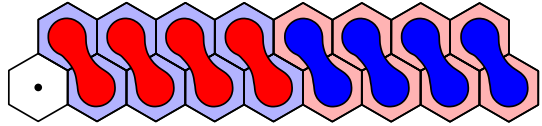}}
\caption{Exchanging the $n/2$ red gourds with the $n/2$ blue gourds takes $\Omega(n^2)$ moves.}
\label{fig:lower-bound}
\end{figure}


\begin{theorem}
Let $B$ be a proper board of size $2n+1$ with $n$ gourds. 
Then {\sc Numbered/Colored Gourd Reconfiguration} problems 
can be solved in $O(n^2)$ moves. This is worst-case optimal.
\end{theorem}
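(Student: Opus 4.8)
The plan is to assemble the theorem from the tools built up in this section: reduce the colored variant to the numbered one by a cheap preprocessing step, then prove the $O(n^2)$ upper bound with the three-phase scheme whose phase~2 is now divide-and-conquer, and finally match it with an $\Omega(n^2)$ lower bound. For the colored case, note that both the initial configuration $C_1$ and the target $C_2$ are given as placements of $n$ gourds, but a gourd carries only an unordered pair of colors rather than an identity. A solution exists iff the two multisets of color pairs coincide (clearly necessary; sufficiency follows from what comes next): fix any bijection between the gourds of $C_1$ and those of $C_2$ that pairs gourds with equal color pairs -- found greedily -- and treat paired gourds as one numbered gourd. A matched gourd may have to end up reversed, but the cycle-reconfiguration primitives of the previous subsection already reverse gourds at no asymptotic cost. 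Hence it suffices to prove the bound for {\sc Numbered Gourd Reconfiguration}, where a solution always exists because $B$ is proper.

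For the numbered case, I would use Lemma~\ref{lem:seven_or_more} to pick a Hamiltonian cycle $H$ of $G_B$ whose tree $G_{T_H}$ has no run of seven or more degree-3 nodes. Phase~1 aligns the gourds of $C_1$ with $H$ in $O(n^2)$ moves by Lemma~\ref{th:hamiltonian}; phase~3 is the reversal of the analogous alignment of $C_2$ with $H$, also $O(n^2)$ moves; what remains is to transform one cyclic arrangement-with-orientations of $n$ gourds along $H$ into another. For that, Lemma~\ref{lem:balanced-split} gives a degree-2 node of $G_{T_H}$ whose dual triangle $t$ splits $H$ into two odd cycles $H_1,H_2$ (sharing either the single vertex of $t$ opposite $e_1$, or the edge $e_1$ with its endpoints), with $m_1+m_2\le m+3$ and $m_1,m_2\ge m/96-7$ where $m=2n-1$. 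The merge step -- exchanging any prescribed set of gourds between $H_1$ and $H_2$, one at a time in cyclic order around each cycle, by rotating gourds through $H_1$, through the extended cycle $H_1^{+}$ (in the one-vertex case), and through $H_2$, parking $E$ in the newly created cell -- has been shown to cost $O(m^2)$ moves for every gluing pattern. Recursing on $H_1$ and $H_2$ (if one is large, first re-choosing by Lemma~\ref{lem:seven_or_more} a ``no-seven'' Hamiltonian cycle of the corresponding sub-board and re-aligning its gourds by Lemma~\ref{th:hamiltonian}, at an extra $O(m^2)$ cost; constant-size sub-boards, including the Star of David and the $3$- and $5$-cell boards, are solved by brute force) yields $T(m)\le T(m_1)+T(m_2)+O(m^2)$ with $m_1+m_2\le m+3$ and $m_1,m_2\ge m/96-7$, which solves to $T(m)=O(m^2)=O(n^2)$. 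Together with phases~1 and~3, the total is $O(n^2)$ moves.

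For worst-case optimality I would use the family sketched in Figure~\ref{fig:lower-bound}: a long, thin proper board on which $n/2$ gourds near one end must be swapped with $n/2$ near the other. Let the potential of a configuration be the sum over gourds of the $G_B$-distance between a gourd's current cell and its target cell; a single move displaces one gourd by $O(1)$, so the potential drops by at most a constant per move, while the starting potential of this instance is $\Omega(n^2)$. Hence $\Omega(n^2)$ moves are required, so the $O(n^2)$ bound is tight.

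The main obstacle is making phase~2's divide-and-conquer uniform -- in particular the merge step. One must enumerate the local patterns by which the two split cycles can touch at $t$ (and rule out the pattern the split cannot produce), and in each pattern verify that a single gourd can be ferried from one cycle to the other within $O(m^2)$ moves using the three rotations above with $E$ correctly parked; one must also argue that the recursion's precondition (no seven consecutive degree-3 nodes) holds at each level or can be re-established cheaply, and that the constant-size base cases -- notably the Star of David, which is not proper -- are disposed of separately so that the recurrence bottoms out.
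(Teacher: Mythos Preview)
Your proof follows the paper's approach closely—the three-phase scheme with divide-and-conquer in phase~2, the use of Lemmas~\ref{lem:seven_or_more} and~\ref{lem:balanced-split}, the merge analysis, and the recurrence are all the same. Your explicit colored-to-numbered reduction and the potential-function lower bound are welcome additions that the paper leaves implicit.

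There is, however, a genuine confusion in how you set up the recursion. You recurse on ``sub-boards'' and propose, at each level, to re-invoke Lemma~\ref{lem:seven_or_more} to pick a fresh no-seven Hamiltonian cycle of ``the corresponding sub-board,'' then re-align via Lemma~\ref{th:hamiltonian}. This step is both ill-defined and unnecessary. It is ill-defined because there is no natural sub-board: the cells \emph{on} $H_i$ form a board with holes (so Lemma~\ref{lem:seven_or_more}, which relies on hole-freeness, need not apply), while the cells \emph{enclosed by} $H_i$ generally include interior cells not on $H_i$ (so $H_i$ is not Hamiltonian there). The recursion should be on the subcycles $H_1,H_2$ themselves, with their gourds already aligned. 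It is unnecessary because the no-seven property is automatically inherited: removing the degree-2 split node $w$ from $G_{T_H}$ yields two subtrees in which no node gains degree, so any run of degree-3 nodes in a subtree is already such a run in $G_{T_H}$ (in the three-shared-vertices case the dual tree of $H_i'$ is one subtree together with $w$ re-attached as a leaf, which again creates no new degree-3 node). Hence Lemma~\ref{lem:balanced-split} applies directly at every level.

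Relatedly, the Star of David does not arise as a base case: the recursion is on odd cycles with aligned gourds, not on boards, and constant-size cycles are handled by the $O(n^3)$ method (via the substructures of Lemma~\ref{lem:subconfig}) in $O(1)$ moves. Mentioning the Star of David as something ``solved by brute force'' is misleading, since that board is not reconfigurable at all.
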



\begin{remark}
Instead of splitting $H$ at a degree-2 node of $G_{T_H}$, we could implement a split of $H$ at a degree-3 node as well. In this case we also need the split at a degree-2 node, because it may be that all balanced splits are at degree-2 nodes. A split at a degree-3 node has various cases to consider; the technicalities of the proof shift from finding a balanced split at a degree-3 node to swapping gourds among three (sub)cycles.
\end{remark}

\section{Conclusion}

We proposed a new sliding-block puzzle, Gourds, with the novel feature that pieces can make turns. 
It consists of a board, a subset of the hexagonal grid, and gourd-shaped pieces that cover exactly two adjacent hexagons of the board. There is just one empty hexagon on the board, to allow limited movement at any time. 


We introduced a numbered and a colored type of this puzzle, where the hexagons of the board show a number or a color. A matching gourd end should cover each hexagon in the solution. The authors have a physical implementation of the colored version, shown in Figure~\ref{fig:photo} 
(see also Figure~\ref{fig:photo2} in the appendix). 
For the reconfiguration problems of both colored and numbered types, we showed that they are always solvable in a quadratic number of moves if the board is ``proper''. However, deciding where each gourd should be for a solution according to the board coloring is {\sf NP}-complete if there are many colors.
We believe that the puzzle is an entertaining puzzle game in reality, using various board shapes and target colorings. 

\begin{figure}[htb]
\centering
\includegraphics[width=0.5\textwidth]{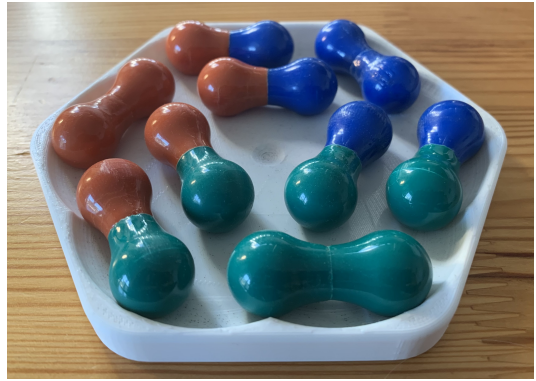}
\caption{Photo of a Gourds puzzle with nine gourds on a hexagonal board. The objective is to bring the green, red, and blue gourd ends together as shown. This paper shows that this can be done from any starting position of the gourds.}
\label{fig:photo}
\end{figure}



One of the main open problems is about finding the shortest sequence of moves between two configurations, as usually studied for these kind of puzzle games and also from the context of reconfiguration problems. 
%
Another main open problem is the characterization of boards with holes that allow any reconfiguration. In this case, boards with Hamiltonian cycles do not always admit any reconfiguration, and some boards without Hamiltonian cycles are always reconfigurable 
(see Figure~\ref{fig:boards-with-holes} in the appendix). 

Another interesting extension is allowing pieces that cover three hexagons, for example in a triangular form. These pieces can be pivoted only, assuming there is still just one free cell. An elongated gourd, covering three hexagons, cannot be rotated so it cannot leave its row.

\medskip
\noindent
{\bf Acknowledgements.} The original idea for Gourds was formed after an invited talk by Ryuhei Uehara at ICALP 2015.


\bibliographystyle{plain}
\bibliography{main}


\clearpage
\appendix

\section{Basic Observations}

\label{app:basic}

In this section, we show some omitted observations and proofs. 

\paragraph*{Pivot Moves vs. Sharp Turns}

\setcounter{observation}{0}
\begin{observation}
On a 3-cell board where the cells are mutually adjacent, a two-colored gourd can reach all six possible positions from a starting position with a pivot move, while it can reach three positions with the sharp turn move. 
\end{observation}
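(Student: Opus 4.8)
The plan is to set up coordinates for the 3-cell board and enumerate positions exhaustively, since the state space is tiny. Label the three mutually adjacent cells $1,2,3$ (they form a triangle in the board graph). A placement of a two-colored gourd, whose ends I call the head $h$ and tail $t$, occupies two of the three cells; the third cell is the empty cell $E$. There are three choices for which cell is empty, and for each such choice two ways to orient the gourd (head on one of the two occupied cells, tail on the other), giving exactly $3 \times 2 = 6$ positions. I would list them as ordered pairs $(\text{cell of }h,\text{cell of }t)$: $(1,2),(2,1),(2,3),(3,2),(1,3),(3,1)$, and fix the starting position to be, say, $(1,2)$ with $E=3$.

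For the pivot part, I would observe that from any position exactly one move type is available on this board: the three cells form an equilateral triangle, so whichever cell is empty, the gourd has both ends adjacent to $E$, which is precisely the pivot configuration (slides and turns need a third collinear or $120^\circ$-bent cell, impossible here). A pivot keeps one end fixed and swings the other end onto $E$. From $(1,2)$ with $E=3$ there are two pivots: fix $h$ at $1$, move $t$ from $2$ to $3$, reaching $(1,3)$ with $E=2$; or fix $t$ at $2$, move $h$ from $1$ to $3$, reaching $(3,2)$ with $E=1$. Iterating, I would draw the reachability graph on the $6$ positions where each node has out-degree $2$ and check it is connected — indeed one verifies it is a single cycle (e.g. $(1,2)\to(1,3)\to(2,3)\to(2,1)\to(3,1)\to(3,2)\to(1,2)$, tracking which end is pinned at each step), so all six positions are reachable.

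For the sharp-turn part, I would note that a sharp turn rotates the gourd by $120^\circ$, which on this board means both ends move: the gourd leaves cells $\{a,b\}$ and lands on cells $\{a',b'\}$ while one end stays incident to the pivot-like corner — more precisely, a sharp turn takes $(a,b)$ to a position where the end that was at the ``outer'' cell is now at the previously empty cell and the other end stays, but the two colors swap sides relative to the fixed corner. The cleanest way to pin this down is to compute the orbit directly: from $(1,2)$, $E=3$, a left sharp turn and a right sharp turn each produce one new position, and I would check that the set of positions reachable by sharp turns from $(1,2)$ is closed after three elements — the three positions in which the head and tail sit in a fixed cyclic relationship, which is an orbit of the order-3 rotation of the triangle. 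So exactly three positions are reached, and the other three (the ``reflected'' orientations) are unreachable by sharp turns alone.

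The only mild obstacle is being careful about what a ``sharp turn'' does to the two labeled ends on such a degenerate board — it is tempting to conflate it with a pivot. I would resolve this by adopting the definition implicit in the paper (a $120^\circ$ rotation of the whole piece, as opposed to pivoting about one end) and then simply tabulating all transitions; with only six states and two moves per state the verification is a finite check that can be presented as a single labeled figure plus a one-line remark that the sharp-turn orbit has size three while the pivot orbit has size six.
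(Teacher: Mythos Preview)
Your proposal is correct and follows the same approach as the paper, which simply states the claim ``is easily confirmed by hand'' with reference to a figure; you make that enumeration explicit. Your prose description of the sharp turn is momentarily garbled (you write ``the other end stays'', which would make it a pivot), but your fallback of tabulating transitions and your conclusion---that the sharp-turn orbit is the order-$3$ rotation orbit $\{(1,2),(2,3),(3,1)\}$---are correct and match the paper.
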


This is easily confirmed by hand as shown in Figure~\ref{fig:threecell_board}. 
Each position can reach within three moves, and especially, this implies that a piece can move to sharp turned position in $O(1)$ moves.


\begin{figure}[htb]
\centering
\scalebox{0.83}{\includegraphics{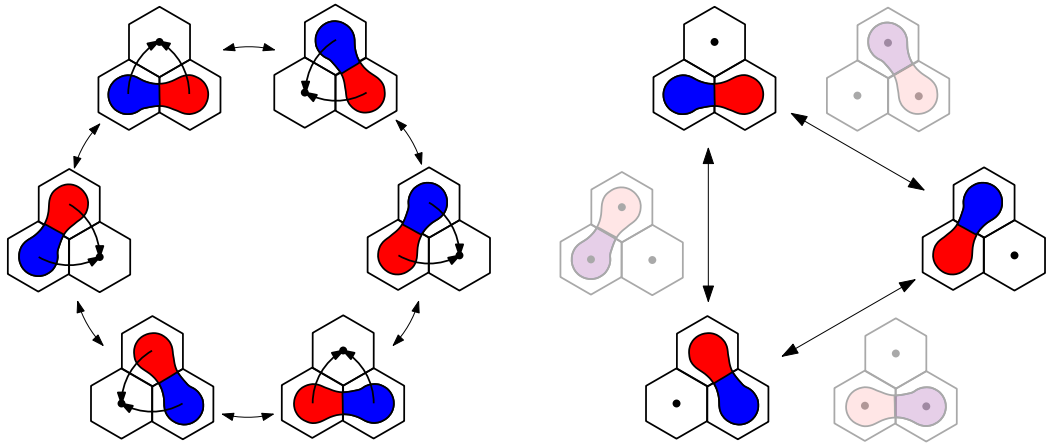}}
\caption{Pivot moves can reach all six possible configurations on a three cell board, while sharp turns can reach only three positions out of six.}
\label{fig:threecell_board}
\end{figure}

\paragraph*{A Gourd on the Star of David graph}

\setcounter{observation}{7}
\begin{observation}
\label{obs:star_of_david}
Any gourd on the Star of David graph board can take only three positions. 
\end{observation}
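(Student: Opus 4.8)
The plan is to argue directly from a $3$-coloring of the hexagons of the Star of David board, exactly as hinted in the caption of Figure~\ref{fig:star_of_david}. The hexagonal grid admits a proper $3$-coloring of its cells (the three colors correspond to the three cosets of the triangular lattice), and in this coloring every gourd — which covers two adjacent hexagons — covers two cells of \emph{distinct} colors. I would first count how many cells of each color appear among the $13$ cells of the Star of David board: the center cell together with its ``inner hexagon'' of six cells, plus the six ``outer points''. A short check shows one color class (the ``red'' one in the figure) contains $7$ cells while the other two classes contain $3$ cells each, so red occurs more often than the other two colors combined.

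Next I would exploit this imbalance just as in the $3\times5$ square-board argument in the introduction. Since every gourd covers exactly one red cell and one non-red cell, the $6$ gourds together cover $6$ red cells and $6$ non-red cells; hence the single empty cell $E$ must be red, at all times, and in particular $7-1 = 6$ of the red cells are covered by gourds while all $6$ non-red cells are covered. Now consider any fixed gourd $g$. It covers one red cell $r$ and one non-red cell. If $g$ were ever to move, $g$ would have to vacate $r$ (a move changes which two cells $g$ occupies, and since $g$ always occupies exactly one red cell, any move either keeps $r$ fixed — impossible for a slide/turn/pivot, which all shift the piece — or replaces $r$ by another red cell, which requires $r$ to become empty at some intermediate instant). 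But $r$ can never become the empty cell, because $E$ is red and $r$ is occupied by $g$: for $E$ to move onto $r$, the gourd occupying $r$ (namely $g$) must first move, which is the very thing we are trying to rule out. More carefully: at any moment each red cell except $E$ is occupied; the only way $g$'s red endpoint can leave $r$ is by a move of $g$, which by definition requires $E$ to be adjacent to $g$ and requires $r$ to be among the cells freed — i.e. requires $r = E$ after the move only, but the move that frees $r$ needs $E$ adjacent beforehand, so two distinct red cells ($r$ and $E$) would have to coexist as the ``empty'' role, contradiction. Thus the red endpoint of every gourd is pinned to its starting red cell forever.

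Given that the red endpoint of $g$ is pinned at $r$, the gourd can only rotate/pivot about $r$: its non-red endpoint can occupy any non-red neighbor of $r$ that the moves physically allow. I would then finish by a small case analysis on the position of $r$ in the Star of David board. A red cell $r$ in this board has at most three neighbors (the board is sparse: outer points have degree $1$, and the inner red cells have degree $2$ or $3$ in the board graph), so $g$ has at most three possible orientations about $r$; and one checks by hand (Figure~\ref{fig:star_of_david}, middle/right) that exactly three positions are realizable — corresponding to $r$ being one of the three ``central-ish'' red cells whose non-red neighbors are reachable, while a gourd whose red endpoint is an outer point cannot move at all and a gourd whose red endpoint is the very center is likewise confined. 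Hence every gourd on the Star of David board has exactly three positions.

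The routine part is the coloring/counting and the physical case check, which can be read off the figure. The main obstacle is making the ``pinning'' argument fully rigorous: one must be careful that a pivot or sharp turn does not sneak the red endpoint to a different red cell without $r$ ever literally being the empty cell. The clean way to phrase it, which I would adopt, is the invariant ``$E$ is always red and every non-$E$ red cell is occupied, and in fact each gourd keeps the same red cell'': the base case holds by the counting above, and no single move can change the red cell of any gourd, because a move of gourd $g$ swaps $E$ with one endpoint of $g$, and since both $E$ and $g$'s two cells include a red one, the swap would momentarily require two red ``holes'' — impossible. This invariant immediately yields that each gourd is confined to rotations about a fixed red cell, and the three-position count follows.
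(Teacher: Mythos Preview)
Your overall strategy---3-color the cells, count $7$ red versus $3+3$ non-red, observe that each gourd covers exactly one red and one non-red cell so $E$ is always red, and deduce a pinning---is exactly the paper's approach. But you have the pinning \emph{backwards}, and this is a genuine error, not a cosmetic one. You claim the \emph{red} endpoint is fixed; in fact it is the \emph{non-red} endpoint that can never move. The reason is simple: whenever a gourd moves, one of its two cells becomes the new empty cell $E$, and since $E$ must always be red, the cell that is vacated must be the gourd's red cell. Hence the gourd keeps its non-red cell and trades its old red cell for the previous $E$. Your assertion that ``a slide/turn/pivot all shift the piece'' so that $r$ cannot stay fixed is false: in both a turn and a pivot one endpoint stays put, and even a slide (when geometrically possible along $r$--$b$--$E$) frees $r$ without contradiction, producing a new red endpoint at the old $E$. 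There is no moment with ``two red holes''; the move is atomic and simply relocates $E$ from one red cell to another.

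Once the pinning is corrected, the count of positions also changes. The non-red cells are precisely the six inner-ring hexagons; each of them is adjacent in the board to the center (red) and to two star points (red), hence has exactly three red neighbors, and that is where the ``three positions'' comes from. Your case analysis on red cells is based on an incorrect picture of the board: the outer points have degree~$2$ (not~$1$), and the unique interior red cell is the center with degree~$6$, so pinning at a red cell would not give a uniform bound of three. The paper's proof states the correct invariant in one line: since there are only six non-red cells and six gourds, no gourd can cover two non-red cells, so every gourd is permanently attached to its unique blue or green cell.
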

\begin{proof}
Consider to color each cell of the board in three colors (say, red, blue and green) based on a proper 3-coloring of the Star of David graph (Figure~\ref{fig:star_of_david} (right)). This means that any gourd has to occupy two cells of different colors. However, since there are three blue cells and green cells (six in total), no gourd can occupy a blue cell and a green cell simultaneously. Therefore, any blue or green cell has to be occupied together with its adjacent red cell, which means that any gourd cannot leave its unique blue or green cell that it originally occupied. 
\end{proof}

\newpage
\section{Additional Figures and Photos}
\label{app:placement}

\paragraph*{Irregular Boards}

We show some irregular boards that are not focused on in the main body of this paper. For boards with holes, we will make some primitive discussions in the next section. 

\begin{figure}[htb]
\centering
\scalebox{0.90}{\includegraphics{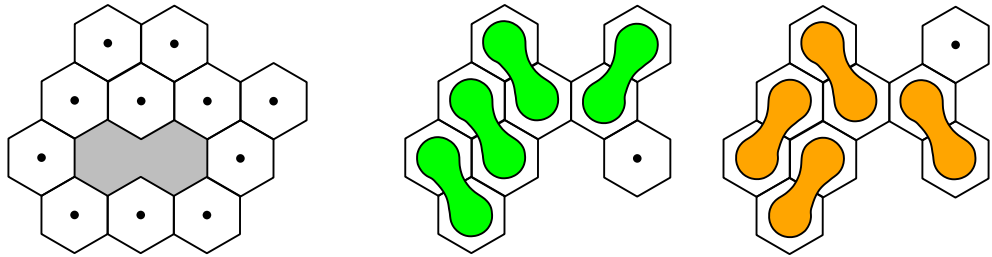}}
\caption{A board with a hole (left), and two configurations that are not reconfigurable each other on a 1-connected board (right two). 
Left three gourds cannot even move.}
\label{fig:irregular_board}
\end{figure}

\paragraph*{A Physical Implementation}

We show our physical implementation of Gourds puzzle in detail. 
It is a colored version (tree colors), and has nine gourd pieces on a 19-cell hexagonal board (see Figure~\ref{fig:photo2}). 

\begin{figure}[htb]
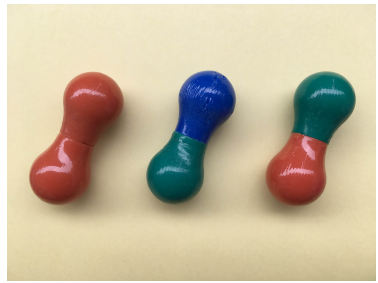
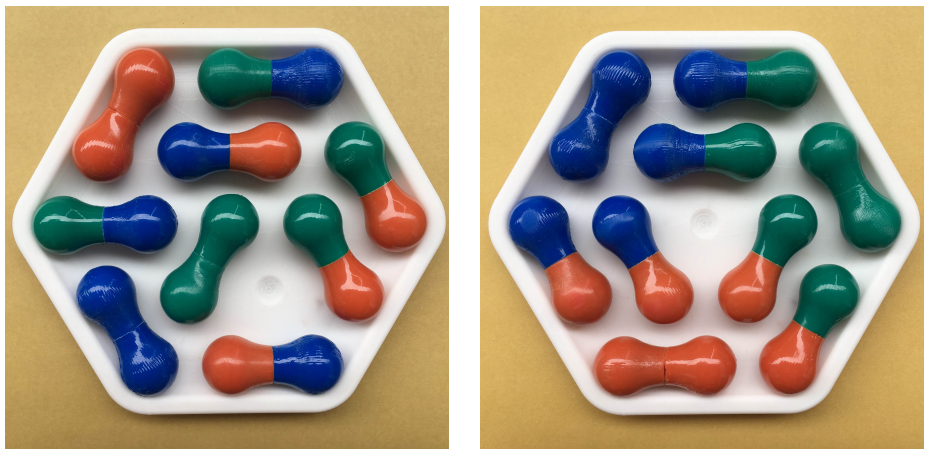

\centering
\includegraphics[width=0.35\textwidth]{IMG_0218.JPG}

\bigskip
\includegraphics[width=0.42\textwidth]{IMG_0217.JPG}
\hspace*{0.2cm}
\includegraphics[width=0.42\textwidth]{IMG_0220.JPG}
\caption{Photo of a Gourds puzzle with nine gourds on a hexagonal board; 
some colored gourds (top), a random starting configuration (bottom left), and an example target configuration (bottom right).
}
\label{fig:photo2}
\end{figure}

\section{2-Connected Boards with Holes}
\label{app:holes}

The situation for a hole-free board of size $2n+1$ with $n$ gourds is well-understood: any reconfiguration is possible if and only if the board is proper (2-connected and not the Star of David, in addition to hole-freeness). This is directly related to the existence of a Hamiltonian cycle. 

For boards with holes, there is no complete characterization. Some 2-connected boards with holes can be reconfigured whereas others cannot. There are boards that admit a Hamiltonian cycle but cannot be reconfigured, and there are boards that do not admit a Hamiltonian cycle but can always be reconfigured (see Figure~\ref{fig:boards-with-holes}).

Expanding on the Star of David example, we also observe that boards with a very unbalanced 3-coloring cannot be reconfigured in any way.
A 3-coloring is very unbalanced if one color occurs more often than the other two colors together.

\begin{figure}[thb]
\centering
\scalebox{0.90}{\includegraphics{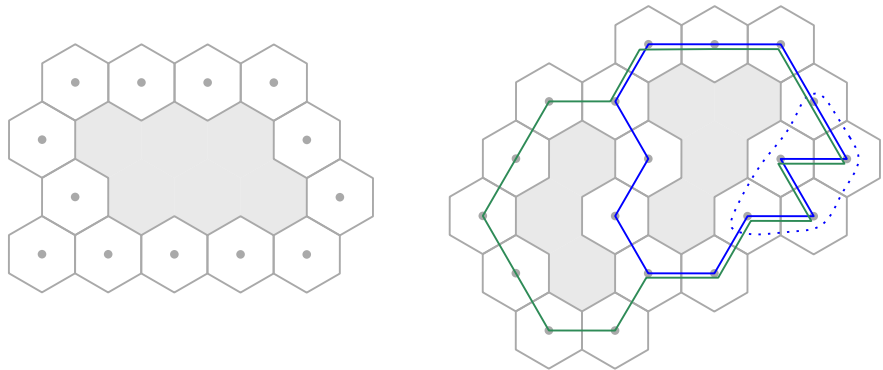}}
\caption{A 2-connected odd-size board with a Hamiltonian cycle that cannot be reconfigured (left). A 2-connected odd-size board that does not admit a Hamiltonian cycle, but which can always be reconfigured. All cells are covered by two cycles of odd length that include a shared swapping station the right five-cell trapezoid part) (right).}
\label{fig:boards-with-holes}
\end{figure}


\end{document}